\newtheorem{theorem}{Theorem}
\newtheorem{lemma}{Lemma}
\definecolor{Purple}{rgb}{0.6350 0.0780 0.1840}
\definecolor{DarkGreen}{rgb}{0 0.50 0}
\definecolor{BrightGreen}{rgb}{0.71, 0.89, 0.58}
\newacronym{MMSE}{MMSE}{minimum mean squared error}
\newacronym{MSE}{MSE}{mean squared error}
\newacronym{NMSE}{NMSE}{normalized MSE}
\newacronym{LMMSE}{LMMSE}{linear minimum mean squared error}
\newacronym{WSS}{WSS}{wide sense stationary}
\newacronym{DFT}{DFT}{discrete Fourier transform}
\newacronym{EM}{EM}{expectation-maximization}
\newacronym{BIC}{BIC}{Bayesian information criterion}
\newacronym{AIC}{AIC}{Akaike information criterion}
\newacronym{i.i.d.}{i.i.d.}{independent and identically distributed}
\newacronym{FLOP}{FLOP}{floating point operation}
\newacronym{AWGN}{AWGN}{additive white Gaussian noise}
\newacronym{LS}{LS}{least squares}
\newacronym{SCM}{SCM}{sample covariance matrix}
\newacronym[firstplural = {covariance matrices (CMs)}]{CM}{CM}{covariance matrix}
\newacronym{ICM}{ICM}{inverse covariance matrix}
\newacronym{AR}{AR}{autoregressive}
\newacronym{MA}{MA}{moving-average}
\newacronym{ARMA}{ARMA}{autoregressive moving-average}
\newacronym{FBM}{FBM}{fractional Brownian motion}
\newacronym{FFT}{FFT}{fast Fourier transform}
\newacronym{GMM}{GMM}{gaussian mixture model}
 \newacronym{VAE}{VAE}{variational autoencoder}
 \newacronym{NN}{NN}{neural network}
 \newacronym{PD}{PD}{positive definite}
 \newacronym{OP}{OP}{optimization problem}
\newacronym{GS}{GS}{Gohberg-Semencul}
\newacronym{SAVG}{SAVG}{SCM averaged along its diagonals}
\newacronym{PLStext}{PLS}{projected LS}
\newacronym{Eig}{\texttt{Eig}}{eigenvalue}
\newacronym{Frob}{\texttt{Frob}}{Frobenius}
\newacronym{PGD}{\texttt{PGD}}{projected gradient descent}
\newacronym{PLS}{\texttt{PLS}}{projected LS}
\newacronym{band}{\texttt{Band}}{banding}
\newacronym{tape}{\texttt{Tape}}{tapering}
\newacronym{circ}{\texttt{Circ}}{}
\newacronym{em}{\texttt{EM}}{}
\newacronym{avg}{\texttt{Avg}}{}
\newacronym{shu}{\texttt{ShU}}{unbiased shrinkage estimator}
\newacronym{shb}{\texttt{ShB}}{biased shrinkage estimator}
\DeclareMathOperator*{\argmax}{\mathrm{argmax}}
\DeclareMathOperator*{\argmin}{\mathrm{argmin}}
\DeclareMathOperator*{\tr}{\mathrm{tr}}
\DeclareMathOperator*{\E}{\mathbb{E}}
\def\BibTeX{{\rm B\kern-.05em{\sc i\kern-.025em b}\kern-.08em
    T\kern-.1667em\lower.7ex\hbox{E}\kern-.125emX}}
\begin{document}

\title{Gohberg-Semencul Estimation of Toeplitz Structured Covariance Matrices and Their Inverses}

\author{Benedikt B\"ock, \IEEEmembership{Graduate Student Member, IEEE}, Dominik Semmler, \IEEEmembership{Graduate Student Member, IEEE}, \\ Benedikt Fesl, \IEEEmembership{Graduate Student Member, IEEE}, Michael Baur, \IEEEmembership{Graduate Student Member, IEEE}, \\ and Wolfgang Utschick, \IEEEmembership{Fellow, IEEE}
	\thanks{
		The authors are with Lehrstuhl f\"ur Methoden der Signalverarbeitung, Technische Universit\"at M\"unchen, 80333 M\"unchen, Germany  (e-mail: benedikt.boeck@tum.de; dominik.semmler@tum.de; benedikt.fesl@tum.de; mi.baur@tum.de; utschick@tum.de).
		
	}

}

\maketitle

\thispagestyle{cfooter}

\begin{abstract}
When only few data samples are accessible, utilizing structural prior knowledge is essential for estimating covariance matrices and their inverses. One prominent example is knowing the covariance matrix to be Toeplitz structured, which occurs when dealing with \ac{WSS} processes. This work introduces a novel class of positive definiteness ensuring likelihood-based estimators for Toeplitz structured \acp{CM} and their inverses. In order to accomplish this, we derive positive definiteness enforcing constraint sets for the \ac{GS} parameterization of inverse symmetric Toeplitz matrices. Motivated by the relationship between the \ac{GS} parameterization and \ac{AR} processes, we propose hyperparameter tuning techniques, which enable our estimators to combine advantages from state-of-the-art likelihood and non-parametric estimators. Moreover, we present a computationally cheap closed-form estimator, which is derived by maximizing an approximate likelihood. Due to the ensured positive definiteness, our estimators perform well for both the estimation of the \ac{CM} and the \ac{ICM}. Extensive simulation results validate the proposed estimators' efficacy for several standard Toeplitz structured \acp{CM} commonly employed in a wide range of applications.

\end{abstract}

\begin{IEEEkeywords}
Covariance estimation, autoregressive processes, Gohberg-Semencul, Toeplitz, likelihood estimation.
\end{IEEEkeywords}

\section{Introduction}
Second moments in the form of the \ac{CM} and its inverse characterize the statistical properties of random variables and, thus, play a fundamental role in statistical signal processing. They are used in a wide range of applications to extract the desired information from collected data, including linear estimation, prediction, dimensionality reduction, and clustering \cite[Ch. 1-2]{Jolliffe1986},\cite[Ch. 3-4]{Kailath2000},\cite{Bester2011}. In practical settings, however, the required second moments are usually unknown and must be estimated \cite[Ch. 3]{Jolliffe1986},\cite{Bester2011}. 
A basic unbiased estimator of the \ac{CM} is the \ac{SCM}, which yields good performance in cases where many data samples are accessible. Nevertheless, its slowly decaying \ac{MSE} makes it unreliable in scenarios where the number of available data samples is small \cite[Sec. 2.2]{Pourahmadi2013}. Furthermore, in cases where the number of data samples is even smaller than their dimension, the \ac{SCM} becomes singular and, thus, cannot be directly used for estimating the \ac{ICM}. Prominent examples of applications in which generally only a very limited amount of samples is available are, e.g., financial engineering \cite{Markowitz1952}, array signal processing \cite{Guerci1999}, biological inference, and social networks \cite{Fan2013},\cite{Jianqing2014}. Thus, many different estimators for the \ac{CM} and \ac{ICM} based on few samples have been developed over the last decades. 

In this paper, we consider the problem of estimating Toeplitz structured \acp{CM} and their inverses, constituting a large subclass of second moments. Toeplitz structures arise when a \ac{WSS} random process is sampled equidistantly. This occurs in, e.g., sensor arrays capturing samples of a \ac{WSS} spatial process or sampling a \ac{WSS} time-varying signal. Many problems in imaging \cite{Snyder1989}, array signal processing \cite{Fuhrmann1991}, speech and audio processing \cite{Ephraim1989}, geostatistics \cite{Furrer2007}, and medical applications \cite{Derado2010} fit into this category. Moreover, many non-stationary processes in, e.g., wireless communication \cite[Sec. 2.3]{Stüber2017}, speech processing \cite{Paliwal2010}, and finance \cite{Heckens2020} , can be assumed to be quasi-stationary and, thus, also exhibit a locally concentrated Toeplitz structure within their second moments.   

Exploiting structural knowledge for estimating \acp{CM} was first considered in \cite{Anderson1973} and \cite{Burg1982}, in which conditions for gradient-based updates of the structured \ac{CM} are derived to maximize the Gaussian likelihood. The main focus in these publications lies on \acp{CM} that can be linearly decomposed by a dictionary of pre-known basis matrices (e.g., Toeplitz). Another likelihood estimator, proposed in \cite{Miller1987}, considers the available samples as incomplete observations from  data samples with larger dimension and a corresponding circulant \ac{CM}. Enforcing the \ac{CM} to be circulant exhibits a closed-form solution for the likelihood's maximizer and allows a computationally efficient way to estimate the Toeplitz structured \ac{CM} by appling the \ac{EM} algorithm for incomplete data \cite{Dempster1977}.

A different approach to estimate Toeplitz structured \acp{CM} is to exploit their property to be constant along the diagonals and to take the \ac{SAVG} as \ac{CM} estimate \cite{Cai2013}. Compared to likelihood estimators, this idea has the advantage that it neither assumes any underlying distribution nor requires solving an \ac{OP}. On the other hand, averaging along the diagonals does not necessarily preserve the \ac{CM}'s property of positive semidefiniteness. Additionally, as the distance from the off-diagonal to the main-diagonal increases, fewer samples are considered for averaging, resulting in a higher variance error.
For that reason, the regularization techniques of banding and tapering the estimator have been analyzed \cite{Cai2013}. Banding describes the method of setting all matrix entries to zero in all off-diagonals above a certain bound. Tapering generalizes banding and multiplies the entries along the off-diagonals with decaying weights rather than setting a hard threshold. These techniques were first proposed for the \ac{SCM}, introducing a bias and leveraging the bias-variance trade-off  \cite{Bickel2008B},\cite{Cai2010}. A further advantage of these non-parametric estimators is that they perform well in the limit of just one sample \cite{Wu2009},\cite{Murry2010}, which is typically the setup in time series analysis \cite{Brockwell2016}. 
Another way of regularization is to shrink the \ac{SCM} towards the \ac{SAVG}, i.e., to build an optimized convex combination of both \cite{Lancewicki2014}. This idea was first proposed in \cite{Ledoit2004} for cases without prior structural information and improved in \cite{ChenWiesel2010} for Gaussian distributions. Generally, the adjustment of the convex weighting is based on a minimization of the estimated \ac{MSE}. 

A closely related topic to estimating Toeplitz structured \acp{CM} is parameter estimation of \ac{WSS} \ac{AR} processes in case of observing one sample \cite[Ch. 6]{Kay1988}. By applying the \ac{GS} decomposition, estimating the parameters of an \ac{AR} process implicitly yields an estimate of the \ac{ICM} (cf. Section \ref{sec:hyperparameter_section}). However, due to the difficulty to optimize the exact likelihood of \ac{AR} processes, the standard approach is to optimize the sample's conditional (approximate) likelihood by assuming initially observed entries to be deterministic, leading to a \ac{LS} problem \cite[Ch. 5]{Hamilton1994}. The resulting \ac{ICM} estimator, however, cannot be guaranteed to be \ac{PD} and, thus, inverting can yield arbitrarily bad \ac{CM} estimates. 

This work introduces a novel class of positive definiteness ensuring likelihood estimators for Toeplitz structured \acp{CM} and their inverses by utilizing the \ac{GS} parameterization for inverse symmetric Toeplitz matrices \cite{Mukherjee1988} and, thus, by effectively fitting an \ac{AR} process to the observed data. 
The closest work to ours is \cite{McWhorter1995}, which also yields an exact likelihood estimator based on the \ac{GS} decomposition. However, this method cannot efficiently distinguish solutions yielding positive definite and indefinite \acp{CM} and is either intractable for higher order \ac{AR} processes or only converges if the initialization is ``sufficiently'' close to the exact solution. 
In summary, our main contributions are the following:
\begin{itemize}
    \item We derive multiple positive definiteness ensuring constraint sets for the \ac{GS} parameterization. These sets include  simple box constraints, offering an efficient means of formulating computationally cheap likelihood-based estimators and integrating additional prior knowledge.
    \item We introduce hyperparameter tuning techniques enabling our estimators to share benefits from both state-of-the-art likelihood and non-parametric estimators.
    \item We establish an exact likelihood estimator, which provably converges to a local optimum and whose complexity only scales quadratically with the samples' dimension.
    \item We propose a computationally cheap closed-form estimator based on an approximate likelihood.
\end{itemize}

\textit{Notation:}
We begin indexing matrices and vectors with zero unless explicitly specified otherwise. We utilize a special case of the generalized Fibonacci sequence from \cite{Sahin2018} denoted as
\begin{equation}
\label{eq:fibonacci_definition}
    \begin{aligned}
        F_0(\bm{r}) =&\ 1 \\
        F_i(\bm{r}) =& \sum_{l=0}^{i-1} r_{i-l}F_l(\bm{r})\ \mathrm{for}\ \mathrm{all}\ i \in \{1,\ldots,P\}.
    \end{aligned}
\end{equation}
In this particular scenario, the vector $\bm{r} = [r_1,\ldots,r_P]^{\operatorname{T}}$ begins with the index $1$. The set of strictly positive real valued numbers, the set of strictly positive natural numbers and the convex cone of Hermitian $P \times P$ \ac{PD} matrices are denoted by $\mathbb{R}_{++}$, $\mathbb{N}$ and $\mathbb{S}^{P}_{++}$, respectively. The complex conjugate of $\beta \in \mathbb{C}$ is given by $\overline{\beta}$ and the operator $\operatorname{\mathbf{E}}$ denotes the shift-down matrix, defined by its entries $\operatorname{E}_{ij} = \delta(i-1-j)$ with $\delta(\cdot)$ being the Kronecker delta. We write the indicator function as $\chi(\cdot)$, which is one if the argument is true and zero otherwise. Moreover, by $\left|\bm{r}\right|$ we indicate the vector, which contains the entries of $\bm{r}$ in absolute value and the $i$,$j$th entry in some matrix $\bm{A}$ is denoted by $\bm{A}_{ij}$. Additionally, by writing $\bm{A}_{\geq i, \geq j}$ or $\bm{r}_{\geq k}$, we refer to the submatrix or subvector of $\bm{A}$ or $\bm{r}$, starting at row index $i$ and column index $j$, or index $k$, respectively.

\section{Problem Formulation}
\label{sec:problem_formulation}

Let $\mathcal{D} = \{\bm{x}^{(n)}\}_{n=1}^{N}$ be a set of $N$ \ac{i.i.d.} real or complex-valued samples drawn from a $P$-dimensional mean-zero Gaussian distribution with a full-rank Toeplitz structured \ac{CM} $\bm{C}$. This work aims to find an estimator $\hat{\bm{\Gamma}}(\mathcal{D})$ for the \ac{ICM} $\bm{\Gamma} = \bm{C}^{-1}$, given the dataset $\mathcal{D}$ and assuming $N$ either smaller or in the range of $P$.
Likelihood estimators maximize the probability density of the collected data over a parameter space containing all allowed configurations of the estimator. For our given setup, the likelihood estimator equals
\begin{equation}
    \label{eq:original_likelihood}
    \hat{\bm{\Gamma}}(\mathcal{D}) = \argmax\limits_{\bm{\Gamma}} \prod_{n=1}^N \mathcal{N}(\bm{x}_n;\bm{0},\bm{\Gamma}^{-1})\ \ \mathrm{s.t.}\ \bm{\Gamma} \in \mathcal{G}
\end{equation}
with $\mathcal{G}$ containing all inverses of \ac{PD} Hermitian Toeplitz matrices. After a few reformulations, by neglecting constants and introducing the \ac{SCM} $\bm{S} = 1/N \sum_{n=1}^N \bm{x}_n\bm{x}_n^{\operatorname{H}}$, the \ac{OP} in \eqref{eq:original_likelihood} can be stated as 
\begin{equation}
    \label{eq:compact_likelihood}
    \hat{\bm{\Gamma}}(\mathcal{D}) = \argmax\limits_{\bm{\Gamma}}\ \log \det \bm{\Gamma} - \tr(\bm{\Gamma} \bm{S})\ \ \mathrm{s.t.}\ \bm{\Gamma} \in \mathcal{G}.
\end{equation}
We specify the constraint set $\mathcal{G}$ by  using the \ac{GS} decomposition for symmetric Toeplitz matrices \cite{Mukherjee1988}. After a minor extension to the complex-valued case, it states that a matrix $\bm{T}^{-1}$ is the inverse of a full-rank but not necessarily \ac{PD} Hermitian Toeplitz matrix $\bm{T} \in \mathbb{C}^{P \times P}$ if and only if it exhibits a decomposition in the following form
\begin{equation}
    \label{eq:GS_formula}
    \bm{T}^{-1} = \frac{1}{\alpha_0}(\bm{B}\bm{B}^{\operatorname{H}} - \bm{Z}\bm{Z}^{\operatorname{H}}),
\end{equation}
where $\bm{B}$ and $\bm{Z}$ are defined as
\begin{equation}
	\label{eq:B_def}
	\bm{B} = \begin{pmatrix}
		\alpha_0 & 0 & 0 & \ldots & 0 \\
		\alpha_1 & \alpha_0 & 0 & \ldots & 0 \\
		\ldots & \ldots & \ldots & \ldots & \ldots \\
	\alpha_{P-1} & \alpha_{P-2} & \alpha_{P-3} & \ldots & \alpha_0
	\end{pmatrix}
\end{equation}
and 
\begin{equation}
	\label{eq:Z_def}
	\bm{Z} = \begin{pmatrix}
		0 & 0 & 0 & \ldots & 0 \\
		\overline{\alpha_{P-1}} & 0 & 0 & \ldots & 0 \\
		\ldots & \ldots & \ldots & \ldots & \ldots \\
		\overline{\alpha_{1}} & \overline{\alpha_{2}} & \ldots & \overline{\alpha_{P-1}} & 0
	\end{pmatrix}.
\end{equation}
The parameter $\alpha_0$ has to be real-valued and positive while the residual parameters $\alpha_1,\ldots,\alpha_{P-1}$ are unconstrained. We reparameterize the \ac{OP} in \eqref{eq:compact_likelihood} by enforcing $\bm{\Gamma}$ to satisfy the \ac{GS} formula in \eqref{eq:GS_formula}. By defining the corresponding log-likelihood $\mathcal{L}_{\mathcal{D}}(\bm{\alpha})$ as 
\begin{equation}
    \label{eq:loglikelihood_def}
    \mathcal{L}_{\mathcal{D}}(\bm{\alpha}) = \log \det \bm{\Gamma}(\bm{\alpha}) - \tr(\bm{\Gamma}(\bm{\alpha}) \bm{S}),
\end{equation}
and including the positive definiteness of $\bm{\Gamma}(\bm{\alpha})$ as constraint, the likelihood estimation reads as
\begin{equation}
\begin{aligned}
    \label{eq:repara_likelihood}
    \hat{\bm{\alpha}}(\mathcal{D}) = & \argmax\limits_{\bm{\alpha}}\ \mathcal{L}_{\mathcal{D}}(\bm{\alpha}) \\ \mathrm{s.t.}\ \bm{\alpha} \in & \hspace{0.5em} \mathcal{G}_{\alpha} =
 \{\bm{\alpha} \in \mathbb{R}_{++} \times \mathbb{C}^{P-1} : \bm{\Gamma}(\bm{\alpha}) \in \mathbb{S}^P_{++}\}
\end{aligned}
\end{equation}
with $\hat{\bm{\Gamma}}(\mathcal{D}) = \bm{\Gamma}(\hat{\bm{\alpha}}(\mathcal{D}))$. Due to the impractical formulation of $\mathcal{G}_{\alpha}$ for applying standard optimization algorithms, the \ac{OP} for the exact log-likelihood in \eqref{eq:repara_likelihood} is generally ill-posed \cite{McWhorter1995}\cite[Sec. 5.9]{Hamilton1994}. Thus, we reformulate $\mathcal{G}_{\alpha}$ in the following to end up with a well-defined \ac{OP}.

\section{Related Work}

In this section, we give a summary about other methods used for estimating Toeplitz structured \acp{CM}. Generally, these estimators can be classified into likelihood estimators, banding and tapering estimators, and shrinkage estimators.

\subsection{Likelihood Estimators}
\label{sec:likelihood_estimators}
In existing literature, the proposed likelihood estimators for Toeplitz structured \acp{CM} aim to identify the optimizer
\begin{equation}
    \hat{\bm{C}} = \argmax_{\bm{C} \in \mathcal{C}_{\mathrm{T}}}\ - \log \det \bm{C} - \tr(\bm{C}^{-1}\bm{S})
\end{equation}
with $\mathcal{C}_{\mathrm{T}}$ containing all positive semidefinite Hermitian Toeplitz structured matrices and, thus, they estimate directly the \ac{CM}. The work in \cite{Miller1987} builds on the insights from \cite{Burg1982} and interprets the observed $P$-dimensional samples as incomplete observations from a $G$-periodic process with a $G \times G$ circulant \ac{CM}, with $G > P$. As circulant matrices can be diagonalized by the \ac{DFT} matrix $\bm{F}_G$, the Gaussian likelihood enables a closed-form optimization for circulant \acp{CM}, given by
\begin{equation}
    \label{eq:dft_estimator}
    \hat{\bm{C}}_{\mathrm{Circ}} = \bm{F}_G^{\operatorname{H}}\operatorname{diag}(\bm{F}_G \bm{S}_{\mathrm{C}} \bm{F}_G^{\operatorname{H}}) \bm{F}_G.
\end{equation}
However, the $G \times G$ \ac{SCM} $\bm{S}_{\mathrm{C}}$ is intractable due to only incomplete subsamples of dimension $P$ being observed. To address this problem, an iterative \ac{EM} algorithm is derived based on the work in \cite{Dempster1977}, which consists of repeating the following step:
\begin{equation}
\begin{aligned}
    \hat{\bm{\Sigma}}^{(t+1)} =  \operatorname{diag}(\hat{\bm{\Sigma}}^{(t)}\tilde{\bm{F}}_G\hat{\bm{C}}_p^{(t)-1}\bm{S}\hat{\bm{C}}_p^{(t)-\operatorname{H}}\tilde{\bm{F}}_G^{\operatorname{H}} + \hat{\bm{\Sigma}}^{(t)} - \\ \hat{\bm{\Sigma}}^{(t)}\tilde{\bm{F}}_G\hat{\bm{C}}_p^{(t)-1}\tilde{\bm{F}}_G^{\operatorname{H}}\hat{\bm{\Sigma}}^{(t)\operatorname{H}}),
\end{aligned}
\end{equation}
where $\hat{\bm{\Sigma}}^{(t)}$ is the estimated $G \times G$ Fourier transformed circulant \ac{CM} in the $t$th iteration, $\tilde{\bm{F}}_G$ is the left $G \times P$ submatrix of $\bm{F}_G$, $\bm{S}$ is the $P \times P$ \ac{SCM} of the observations, and $\hat{\bm{C}}_p^{(t)}$ is the left upper $P \times P$ submatrix of $\bm{F}_G\hat{\bm{\Sigma}}^{(t)}\bm{F}_G^{\operatorname{H}}$ and, thus, an estimator of the $P \times P$ Toeplitz structured \ac{CM} in the $t$th iteration. After reaching a termination criterion in the $T$th iteration, $\hat{\bm{C}}^{(T)}_P$ is considered as the estimate of the \ac{CM} and will be denoted by $\hat{\bm{C}}_{\mathrm{EM}}$.

Finally, by assuming $G = P$ and substituting the \ac{SCM} $\bm{S}$ for $\bm{S}_{\mathrm{C}}$, the closed-form likelihood estimator $\hat{\bm{C}}_{\mathrm{Circ}}$ in \eqref{eq:dft_estimator} is considered as another likelihood estimate.

\subsection{Banding and Tapering Estimators}
\label{sec:banding_tapering}
Initially, the method of banding and tapering was applied to the \ac{SCM} to reduce its \ac{MSE} \cite{Bickel2008B}. In both cases, the operation consists of an elementwise multiplication of the \ac{SCM} $\bm{S}$ with a specific mask matrix $\bm{M}$, i.e.,

\begin{equation}
    \label{eq:elementwise_mult}
    \hat{\bm{C}} = \bm{M} \odot \bm{S}.
\end{equation}
In case of banding, the mask matrix $\bm{M}$ is given by
\begin{equation}
    \label{eq:banding_mask}
    \bm{M}_{ij} = \chi(|i-j| \leq k_{\mathrm{B}}),\end{equation}
with $k_{\mathrm{B}}$ being a hyperparameter and $\chi(\cdot)$ being the indicator function. Tapering, on the other hand, is a generalization of banding, in which we allow a windowing function $g(\cdot)$ of choice determining the entries $\bm{M}_{ij}$, i.e.,
\begin{equation}
    \label{eq:tapering_mask}
    \bm{M}_{ij} = g(|i-j|).
\end{equation}
The work in \cite{Cai2013} investigates banding and tapering for estimating Toeplitz structured \acp{CM} by exchanging the \ac{SCM} with the \ac{SAVG} $\bm{S}_{\mathrm{avg}}$, i.e., 
\begin{equation}
    \label{eq:sCov_avg}
    \bm{S}_{\mathrm{avg}} = \frac{\tr(\bm{S})}{P}\operatorname{\mathbf{I}} + \sum_{q=1}^{P-1} \frac{\tr(\bm{S}\operatorname{\mathbf{E}}^q)}{P-q}(\operatorname{\mathbf{E}}^q)^{\operatorname{T}} + \frac{\overline{\tr(\bm{S}\operatorname{\mathbf{E}}^q)}}{P-q}\operatorname{\mathbf{E}}^q.
\end{equation}
However, since $\bm{S}_{\mathrm{avg}}$ is not necessarily positive semidefinite, neither the tapered nor the banded estimator $\hat{\bm{C}}$ in \eqref{eq:elementwise_mult} are guaranteed to possess this property. Various options exist to address this problem. One possibility is described in \cite{Cai2013}, which is based on a modification of the corresponding spectral density. However, these methods may yield rank deficient estimators, making them unsuitable for estimating the \ac{ICM}. 
In fact, \cite{Cai2013} proposes an \ac{ICM} estimator, which is optimal with respect to its convergence rate. However, this estimator suffers from requiring genie knowledge about the true \ac{CM}. Moreover, it neglects the observations and outputs the identity matrix as \ac{ICM} estimate if the \ac{CM} tapering estimator contains an eigenvalue below a certain threshold, which leads to arbitrarily large errors for particular estimates. 
To our knowledge, no practical method has been proposed in the literature for constructing a well-conditioned banding or tapering based estimator of the \ac{ICM}. For tuning the hyperparameter, \cite{Bickel2008B} employs $k$-fold cross validation.
However, the case of having one single sample, i.e., $N = 1$, renders $k$-fold cross validation infeasible and was investigated in \cite{Wu2009} and \cite{Murry2010}. Instead of using the unbiased estimator $\bm{S}_{\mathrm{avg}}$ for banding or tapering, the authors consider a biased version $\bm{S}_{\mathrm{acf}}$, in which the summation over all entries along a \ac{SCM}'s diagonal is divided by the sample's dimension $P$, instead of the number of entries. This is usually referred to as sample autocovariance function \cite[Sec. 1.4.4]{Brockwell2016}. Additionally, the authors in \cite{Wu2009} and \cite{Murry2010} introduce tailored methods for hyperparameter tuning. 
\subsection{Shrinkage Estimator}
\label{sec:shrinkage}
The idea of shrinkage estimators is to utilize the bias-variance trade-off and to intentionally incorporate bias by building a convex combination of the unbiased \ac{SCM} with a biased target matrix $\bm{T}$ \cite{Ledoit2004}. The estimator is given by
\begin{equation}
    \hat{\bm{C}}_{\mathrm{shrink}} = (1 - \rho) \bm{S} + \rho \bm{T}
\end{equation}
with shrinkage coefficient $\rho \in [0,1]$. By minimizing the \ac{MSE} with respect to $\rho$, i.e.,
\begin{equation}
    \rho_{\mathrm{opt}}(\bm{T},\bm{C}) = \argmin_{\rho \in [0,1]} \E[\| (1 - \rho) \bm{S} + \rho \bm{T} - \bm{C} \|_{\mathrm{F}}^2],
\end{equation}
one can find a value, which optimally leverages the bias-variance trade-off. Since $\rho_{\mathrm{opt}}(\bm{T},\bm{C})$ depends on the true \ac{CM} $\bm{C}$, it is intractable and must be estimated. In \cite{ChenWiesel2010}, an estimator $\hat{\rho}$ is proposed for the case of \ac{i.i.d.} Gaussian data samples and a scaled identity as target matrix, i.e., $\bm{T}_{\mathrm{I}} = \frac{\tr(\bm{S})}{P}\operatorname{\mathbf{I}}$. The work in \cite{Lancewicki2014} introduces $\bm{S}_{\mathrm{avg}}$, defined in \eqref{eq:sCov_avg}, as a specific target matrix for Toeplitz structured \acp{CM}. Since $\bm{S}_{\mathrm{avg}}$ is also unbiased, taking it as a target does not introduce bias, but still decreases the variance error due to the utilization of structural knowledge. The work in \cite{Zhang2019} considers another Toeplitz structured target $\bm{T}_{\operatorname{H}}$, which is constant in all off-diagonal entries, and thus, introduces a bias. More precisely,
\begin{equation}
    \label{eq:shrinkage_biased_target}
    \bm{T}_{\operatorname{H}} = \frac{\tr(\bm{S})}{P}\operatorname{\mathbf{I}} + \frac{\tr(\bm{S}\bm{H})}{P(P-1)}\bm{H}
\end{equation}
with $\bm{H} = \bm{1}\bm{1}^{\operatorname{T}} - \operatorname{\mathbf{I}}$ and $\bm{1}$ being the all-ones vector. Due to the positive definiteness of $\bm{T}_{\operatorname{H}}$, the resulting shrinkage estimator is \ac{PD}. In our simulations, we take the proposed shrinkage coefficient estimators $\hat{\rho}$ in \cite{Zhang2019} for both targets $\bm{S}_{\mathrm{avg}}$ and $\bm{T}_{\operatorname{H}}$.

\section{Positive Definiteness Constraints for the \ac{GS} Decomposition}
\label{sec:psd_constraints}

The goal of this section is to bring \eqref{eq:repara_likelihood} into the standard form of \ac{OP}s, cf. \cite[Sec. 1.1]{Boyd2004}:
\begin{equation}
    \max\limits_{\bm{\alpha}} f(\bm{\alpha})\ \ \mathrm{s.t.}\ g_i(\bm{\alpha}) \leq 0,\ i = 1,\ldots,m
\end{equation}
with  $\bm{\alpha} \in \mathbb{R} \times \mathbb{C}^{P-1}$.
Since $\alpha_0$ has to be strictly positive, we introduce a small positive constant $\epsilon_0$ serving as a lower bound for $\alpha_0$ and resulting in the inequality constraint
\begin{equation}
    g_1(\bm{\alpha}) = \epsilon_0 - \alpha_0.
\end{equation}
It remains to reformulate the condition of $\bm{\Gamma}(\bm{\alpha})$ to be \ac{PD}, which is explained in the following. 
\subsection{Eigenvalue Constraints}
\label{sec:eigenvalue}
Various approaches exist to include the constraint of positive definiteness into an \ac{OP}. The geometric approach of iteratively projecting onto $\mathbb{S}^P_{++}$, cf. \cite[Sec. 8.1.1]{Boyd2004}, is not applicable since our \ac{OP} is conducted directly over the domain of $\bm{\alpha}$. Reparameterizing the \ac{OP} using the Cholesky decomposition of \ac{PD} matrices \cite{Benson2003} also cannot be used due to the restriction to the \ac{GS} parameterization \eqref{eq:GS_formula}. 
Alternatively, we can ensure positive definiteness by enforcing positive eigenvalues of $\bm{\Gamma}(\bm{\alpha})$ \cite{Benson2003}, i.e.,
\begin{equation}
\begin{aligned}
\label{eq:eig_opt}
    \hat{\bm{\alpha}}(\mathcal{D}) = & \argmax\limits_{\bm{\alpha}}\ \mathcal{L}_{\mathcal{D}}(\bm{\alpha}) \\  &\quad\ \mathrm{s.t.} \quad \epsilon_0 - \alpha_0 \leq 0 \\ & \quad \quad \quad \ \
    \epsilon_i - \lambda_i(\bm{\Gamma}(\bm{\alpha})) \leq 0,\ i = 1,\ldots,P
\end{aligned}
\end{equation}
with $\lambda_i(\bm{\Gamma}(\bm{\alpha}))$ being the $i$th eigenvalue of $\bm{\Gamma}(\bm{\alpha})$.
The eigenvalues of a Hermitian matrix are differentiable \cite[Th. 6.8]{Kato1966}, rendering the \ac{OP} solvable with, e.g., an interior point method. We introduce small positive constants $\epsilon_i$ for each eigenvalue to enforce strict positivity. 
However, the eigenvalue decomposition in each step takes $\mathcal{O}(P^3)$ \acp{FLOP} and the number of nonlinear inequality constraints scales with $P$. These observations render the estimator to be computationally infeasible in high-dimensional settings and, thus, more advanced constraint sets are needed. 
\subsection{Frobenius-Based Constraint}
\label{sec:frobenius}
By inserting the \ac{GS} decomposition \eqref{eq:GS_formula} into the definition of positive definiteness 
and a few reformulations, we obtain the following equivalence.
\begin{lemma}
\label{lem:frob}
If $\bm{\Gamma}(\bm{\alpha})$ satisfies the \ac{GS} decomposition in \eqref{eq:GS_formula} and $\mathcal{A} \subseteq \mathbb{R}_{++} \times \mathbb{C}^{P-1}$, then
\begin{align}
\left(\bm{\Gamma}(\bm{\alpha})\in \mathbb{S}^P_{++}\ \mathrm{for}\ \mathrm{all}\ \bm{\alpha} \in \mathcal{A}\right) \! \! \iff \! \! \left(\max_{\bm{\alpha} \in \mathcal{A}} \|\bm{Z}^{\operatorname{H}}\bm{B}^{-\operatorname{H}}\|_2^2 < 1\right)
\end{align}
with $\bm{B}$ and $\bm{Z}$ given by \eqref{eq:B_def} and \eqref{eq:Z_def}, respectively. 
\end{lemma}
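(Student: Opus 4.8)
The plan is to reduce the positive-definiteness condition to a spectral-norm inequality by a single congruence transformation, establish the equivalence pointwise in $\bm{\alpha}$, and then lift it to the quantified statement over $\mathcal{A}$.

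First I would exploit that $\alpha_0 > 0$ on $\mathbb{R}_{++} \times \mathbb{C}^{P-1}$, so that the scaling by $1/\alpha_0$ in \eqref{eq:GS_formula} is positivity-preserving and $\bm{\Gamma}(\bm{\alpha}) \in \mathbb{S}^P_{++}$ is equivalent to $\bm{B}\bm{B}^{\operatorname{H}} - \bm{Z}\bm{Z}^{\operatorname{H}} \succ 0$. The key structural observation is that $\bm{B}$ in \eqref{eq:B_def} is lower triangular with constant diagonal $\alpha_0 \neq 0$ and hence invertible, which lets me factor out the congruence
\begin{equation}
\bm{B}\bm{B}^{\operatorname{H}} - \bm{Z}\bm{Z}^{\operatorname{H}} = \bm{B}\left(\operatorname{\mathbf{I}} - \bm{B}^{-1}\bm{Z}\bm{Z}^{\operatorname{H}}\bm{B}^{-\operatorname{H}}\right)\bm{B}^{\operatorname{H}}.
\end{equation}

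By Sylvester's law of inertia, congruence by the invertible $\bm{B}$ preserves the signature, so the left-hand side is \ac{PD} if and only if $\operatorname{\mathbf{I}} - \bm{B}^{-1}\bm{Z}\bm{Z}^{\operatorname{H}}\bm{B}^{-\operatorname{H}} \succ 0$. Writing this as $\operatorname{\mathbf{I}} - \bm{M}\bm{M}^{\operatorname{H}}$ with $\bm{M} = \bm{B}^{-1}\bm{Z}$, its eigenvalues are $1 - \lambda_i(\bm{M}\bm{M}^{\operatorname{H}})$; since $\bm{M}\bm{M}^{\operatorname{H}}$ is positive semidefinite with largest eigenvalue $\|\bm{M}\|_2^2$, the matrix is \ac{PD} exactly when $\|\bm{M}\|_2^2 < 1$. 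Invariance of the spectral norm under conjugate transposition finally gives $\|\bm{M}\|_2 = \|\bm{B}^{-1}\bm{Z}\|_2 = \|\bm{Z}^{\operatorname{H}}\bm{B}^{-\operatorname{H}}\|_2$, which yields the pointwise equivalence $\bm{\Gamma}(\bm{\alpha}) \in \mathbb{S}^P_{++} \iff \|\bm{Z}^{\operatorname{H}}\bm{B}^{-\operatorname{H}}\|_2^2 < 1$.

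To conclude, I would lift this pointwise statement over $\mathcal{A}$. For the forward direction, if $\bm{\Gamma}(\bm{\alpha}) \in \mathbb{S}^P_{++}$ for every $\bm{\alpha} \in \mathcal{A}$, then $\|\bm{Z}^{\operatorname{H}}\bm{B}^{-\operatorname{H}}\|_2^2 < 1$ for each such $\bm{\alpha}$, and in particular the value attained at the maximizer is strictly below $1$; for the backward direction, $\max_{\bm{\alpha} \in \mathcal{A}} \|\bm{Z}^{\operatorname{H}}\bm{B}^{-\operatorname{H}}\|_2^2 < 1$ dominates every pointwise value, restoring \ac{PD} everywhere on $\mathcal{A}$. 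The main obstacle I anticipate is precisely this last step: the assertion ``$\|\bm{Z}^{\operatorname{H}}\bm{B}^{-\operatorname{H}}\|_2^2 < 1$ for all $\bm{\alpha}$'' is in general weaker than ``$\sup_{\bm{\alpha}} \|\bm{Z}^{\operatorname{H}}\bm{B}^{-\operatorname{H}}\|_2^2 < 1$,'' so I would need the maximum to be attained (via compactness of $\mathcal{A}$ together with continuity of $\bm{\alpha} \mapsto \|\bm{Z}^{\operatorname{H}}\bm{B}^{-\operatorname{H}}\|_2$) for the $\max$ to faithfully capture the universally quantified condition. Once attainment is granted the equivalence is immediate, whereas the linear-algebra core — the congruence factorization and the inertia argument — is routine.
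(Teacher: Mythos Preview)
Your proof is correct and follows essentially the same route as the paper's: the paper works at the level of the quadratic form, substituting $\bm{y}=\bm{B}^{\operatorname{H}}\bm{x}$ and normalizing to recover the variational definition of the spectral norm, which is precisely your congruence factorization and Sylvester-inertia argument rewritten vectorwise. Your observation about $\max$ versus $\sup$ over $\mathcal{A}$ is a genuine subtlety, but the paper's own proof makes the same tacit passage from ``$<1$ for every $\bm{\alpha}\in\mathcal{A}$'' to ``$\max_{\bm{\alpha}\in\mathcal{A}}<1$'' without further comment, so you have matched the paper's level of rigor.
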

\begin{proof}
See Appendix \ref{proof_lemma_frob}.
\end{proof}
Lemma \ref{lem:frob} can be utilized to reformulate the \ac{OP} \eqref{eq:eig_opt} to just having two inequality constraints independently of the samples' dimension $P$. The spectral norm, however, is not differentiable everywhere \cite[Th. 1.1]{Lewis1994}. One method to overcome this issue is to apply a subgradient method for finding a local optimum. Alternatively, we introduce a differentiable upper bound for the spectral norm, which ensures the differentiability of the constraint. More precisely, we leverage the relationship that the Frobenius norm of a matrix is always greater than or equal to its spectral norm. This allows us to deduce the following implication:
\begin{equation}
    \label{eq:impli}
    \left(\max_{\bm{\alpha} \in \mathcal{A}} \|\bm{Z}^{\operatorname{H}}\bm{B}^{-\operatorname{H}}\|_F^2 < 1\right) \Rightarrow \left(\bm{\Gamma}(\bm{\alpha})\in \mathbb{S}^P_{++}\ \mathrm{for}\ \mathrm{all}\ \bm{\alpha} \in \mathcal{A}\right).
\end{equation}
By utilizing \eqref{eq:impli}, we reformulate the \ac{OP} \eqref{eq:repara_likelihood} while preserving the differentiability of the constraint
\begin{equation}
\begin{aligned}
\label{eq:frob_opt}
    \hat{\bm{\alpha}}(\mathcal{D}) = & \argmax\limits_{\bm{\alpha}}\ \mathcal{L}_{\mathcal{D}}(\bm{\alpha})\\ & \quad \ \mathrm{s.t.} \quad \epsilon_0 - \alpha_0 \leq 0 \\ & \quad \quad \quad \ \ \|\bm{Z}^{\operatorname{H}}\bm{B}^{-\operatorname{H}}\|_F^2 - 1 + \epsilon_f \leq 0
\end{aligned}
\end{equation}
with $\epsilon_f$ being a small positive constant enforcing $\|\bm{Z}^{\operatorname{H}}\bm{B}^{-\operatorname{H}}\|_F^2$ to be strictly smaller $1$. Both, the inversion of a triangular Toeplitz matrix as well as the multiplication of two upper (lower) triangular Toeplitz matrices preserve the triangular Toeplitz structure \cite{Lin2004,Kucerovsky2016}. Thus, $\|\bm{Z}^{\operatorname{H}}\bm{B}^{-\operatorname{H}}\|_F^2$ is fully determined by the entries of $\bm{B}^{-1}\bm{z}_1$, where $\bm{z}_1$ is the first column of $\bm{Z}$. It takes $\mathcal{O}(P^2)$ operations to compute $\bm{B}^{-1}\bm{z}_1$, leading to a significantly reduced complexity to evaluate the constraint. While the eigenvalue constraints in \eqref{eq:eig_opt} represent an equivalent reformulation of $\mathcal{G}_{\alpha}$ in \eqref{eq:repara_likelihood}, the constraint set in \ac{OP} \eqref{eq:frob_opt} is a subset of $\mathcal{G}_{\alpha}$. A smaller constraint set has a stronger regularization effect and generally decreases the estimator's variance. At the same time, it potentially introduces a bias, whose effect can be either beneficial or disadvantageous depending on how well the smaller constraint set can cover the estimator's most relevant features. In our simulations, we either observed similar or even better performance by taking the constraint set in \eqref{eq:frob_opt} instead of \eqref{eq:eig_opt}, rendering it well suited for the problem at hand.

\subsection{Box Constraints}
\label{sec:box_constraints}

Convexity is generally a desirable property for the constraint set because it allows for projecting onto the set by solving a convex \ac{OP} \cite[Sec. 8.1.1]{Boyd2004}. Box constraints further simplify projections, requiring only $\mathcal{O}(P)$ \acp{FLOP}. In the following, we derive box constraints constituting a subset of $\mathcal{G}_{\alpha}$ and, thus, imposing a stronger regularization. However, they offer an additional means to control this effect by incorporating additional prior knowledge, which will be discussed in more detail in Section \ref{sec:tuning_bounds} and, thus, their advantages extend beyond computational efficiency.
We start with the constraint on the Frobenius norm in \eqref{eq:frob_opt}. The matrix $\bm{B}$ is a lower triangular Toeplitz matrix for which the inverse can be computed by means of the generalized Fibonacci polynomials \cite{Sahin2018}. By utilizing this connection, we derive the following relation.

\begin{lemma}
\label{lem:g}
Let $\bm{B}$ and $\bm{Z}$ be defined according to \eqref{eq:B_def} and \eqref{eq:Z_def}, respectively, then
\begin{equation}
    \|\bm{Z}^{\operatorname{H}}\bm{B}^{\mathrm{-H}}\|_F^2 = \sum_{d=1}^{P-1} (P-d) |g_d|^2
\end{equation}
with 
\begin{equation}
    \label{eq:g_d}
    g_d = \sum_{j=1}^d \frac{\alpha_{P-j}}{\alpha_0}F_{d-j}\left(-\frac{\overline{\bm{\alpha}_{\geq1}}}{\alpha_0}\right)\ \mathrm{for}\ \mathrm{all}\ d = 1,\ldots,P-1,
\end{equation}
and $F_{i}(\cdot)$ being the special case of the generalized Fibonacci sequence in \eqref{eq:fibonacci_definition}.
\end{lemma}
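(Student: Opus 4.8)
The plan is to exploit the observation recorded just before the lemma, that $\bm{Z}^{\operatorname{H}}\bm{B}^{-\operatorname{H}}$ is an upper triangular Toeplitz matrix, and to combine it with the elementary fact that the Frobenius norm of such a matrix is a weighted sum over its diagonals: if $\bm{M}$ is upper triangular Toeplitz with $d$th superdiagonal value $g_d$, then the $d$th superdiagonal carries exactly $P-d$ copies of $g_d$, so $\|\bm{M}\|_F^2 = \sum_{d=0}^{P-1}(P-d)|g_d|^2$. With this reduction in hand, it remains only to (i) confirm that $\bm{M} = \bm{Z}^{\operatorname{H}}\bm{B}^{-\operatorname{H}}$ is indeed upper triangular Toeplitz with a vanishing main diagonal, and (ii) identify its generating values $g_d$ with the expression in \eqref{eq:g_d}.

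For step (i), I would first read off the Toeplitz diagonals of the two factors. From \eqref{eq:Z_def}, $\bm{Z}$ is strictly lower triangular Toeplitz whose $d$th subdiagonal equals $\overline{\alpha_{P-d}}$, so $\bm{Z}^{\operatorname{H}}$ is strictly upper triangular Toeplitz whose $d$th superdiagonal equals $\alpha_{P-d}$ for $d = 1,\ldots,P-1$. The matrix $\bm{B}$ is lower triangular Toeplitz, so by the cited closure properties its inverse $\bm{B}^{-1}$ is again lower triangular Toeplitz and hence $\bm{B}^{-\operatorname{H}}$ is upper triangular Toeplitz; the product of two upper triangular Toeplitz matrices preserves the structure, yielding an upper triangular Toeplitz $\bm{M}$. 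Since $\bm{Z}^{\operatorname{H}}$ has a zero main diagonal, so does $\bm{M}$, i.e.\ $g_0 = 0$ and the sum may start at $d = 1$, as claimed.

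The heart of the argument is the explicit description of $\bm{B}^{-1}$ and the ensuing convolution. Writing $\bm{B}^{-1}$ as lower triangular Toeplitz with $d$th subdiagonal value $b_d$, the identity $\bm{B}\bm{B}^{-1} = \operatorname{\mathbf{I}}$ gives $\alpha_0 b_0 = 1$ and $\alpha_0 b_d + \sum_{l=0}^{d-1}\alpha_{d-l}b_l = 0$; setting $\bm{r} = -\bm{\alpha}_{\geq 1}/\alpha_0$ and comparing with \eqref{eq:fibonacci_definition}, a short induction on $d$ shows $b_d = \tfrac{1}{\alpha_0}F_d(\bm{r})$. Taking conjugate transposes, the $d$th superdiagonal of $\bm{B}^{-\operatorname{H}}$ is $\overline{b_d} = \tfrac{1}{\alpha_0}F_d(-\overline{\bm{\alpha}_{\geq 1}}/\alpha_0)$, where I use that $\alpha_0$ is real and that $F_d$ is a polynomial with real coefficients in the components of its argument, so that conjugating the argument conjugates the value. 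Multiplying the two upper triangular Toeplitz factors amounts to convolving their generating sequences, so the $d$th superdiagonal of $\bm{M}$ is $g_d = \sum_{j=1}^{d}\alpha_{P-j}\,\overline{b_{d-j}} = \sum_{j=1}^{d}\tfrac{\alpha_{P-j}}{\alpha_0}F_{d-j}(-\overline{\bm{\alpha}_{\geq 1}}/\alpha_0)$, which is exactly \eqref{eq:g_d}; substituting into the diagonalwise Frobenius formula completes the proof. I expect the main obstacle to lie in the bookkeeping of this last step: correctly pairing the superdiagonal indices under the conjugate transpose, tracking which factor contributes the $\alpha_{P-j}$ term and which the Fibonacci term, and pinning down the base and inductive steps of the identity $b_d = \tfrac{1}{\alpha_0}F_d(\bm{r})$, since a single sign or index shift in $\bm{r}$ would propagate through the conjugation and corrupt the final formula.
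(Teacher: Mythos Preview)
Your proposal is correct and follows essentially the same route as the paper: both exploit that $\bm{Z}^{\operatorname{H}}\bm{B}^{-\operatorname{H}}$ is strictly upper triangular Toeplitz, identify the superdiagonal entries of $\bm{B}^{-\operatorname{H}}$ via the Fibonacci recursion, obtain $g_d$ as the convolution of the two generating sequences, and then count the $P-d$ occurrences of $g_d$ in the Frobenius norm. The only cosmetic difference is that the paper quotes the closed form \eqref{eq:fibonacci_inverse} for $\bm{B}^{-\operatorname{H}}$ from \cite{Sahin2018}, whereas you rederive $b_d = \tfrac{1}{\alpha_0}F_d(-\bm{\alpha}_{\geq 1}/\alpha_0)$ directly from $\bm{B}\bm{B}^{-1}=\operatorname{\mathbf{I}}$.
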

\begin{proof}
See Appendix \ref{proof_lemma_g}.
\end{proof}
Motivated by the observation that \eqref{eq:g_d} only depends on $\bm{\alpha}$ through fractions of the form $\alpha_i/\alpha_0$ ($i>0$), we introduce the following box constraints
\begin{equation}
\label{eq:K_bound}
    \left|\frac{\alpha_i}{\alpha_0}\right| \leq K_i,\ K_i > 0\ \mathrm{for}\ \mathrm{all}\ i = 1,\ldots,P-1.
\end{equation}
By means of Lemma \ref{lem:g} and the bounds in \eqref{eq:K_bound}, we derive the following Theorem, which guarantees the existence of positive definiteness enforcing box constraints. 
\begin{theorem}
\label{th:theorem_K}
    Let $\bm{\Gamma}(\bm{\alpha})$ satisfy the \ac{GS} decomposition in \eqref{eq:GS_formula} and let $\tilde{\mathcal{A}}$ be defined as 
    \begin{equation}
    \label{eq:A_set_definition}
    \tilde{\mathcal{A}} =\! \{\bm{\alpha} \in \mathbb{R}_{++} \times \mathbb{C}^{P-1}\! : |\alpha_i| \leq K_i \alpha_0\ \mathrm{for}\ \mathrm{all}\ i = 1,\ldots,P-1\}.
\end{equation}
Moreover, let $\bm{K} = [K_1,\ldots,K_{P-1}]^{\operatorname{T}}$ and $B(\bm{K})$ satisfy
\begin{equation}
    \label{eq:B_of_K}
    B(\bm{K}) = \sum_{d=1}^{P-1}(P-d)\left(\sum_{j=1}^{d} K_{P-j} F_{d-j}(\bm{K})\right)^2.
\end{equation}
Then,
\begin{equation}
    \exists \bm{K} \in \mathbb{R}_{++}^{P-1}: B(\bm{K}) < 1
\end{equation}
with $B(\bm{K})$ exhibiting the following property:
\begin{equation}
    \label{eq:B_property}
    (B(\bm{K}) < 1) \Rightarrow (\bm{\Gamma}(\bm{\alpha})\in \mathbb{S}^P_{++}\ \mathrm{for}\ \mathrm{all}\ \bm{\alpha} \in \tilde{\mathcal{A}}).
\end{equation}
\end{theorem}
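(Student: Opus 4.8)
The plan is to split the claim into its two halves: the implication \eqref{eq:B_property}, which is the substantive part, and the existence of a feasible $\bm{K}$, which is a short continuity argument. For the implication, I would start from the chain already available in the excerpt: by \eqref{eq:impli} it suffices to show that $\|\bm{Z}^{\operatorname{H}}\bm{B}^{-\operatorname{H}}\|_F^2 < 1$ holds simultaneously for every $\bm{\alpha} \in \tilde{\mathcal{A}}$, and by Lemma \ref{lem:g} this norm equals $\sum_{d=1}^{P-1}(P-d)|g_d|^2$ with $g_d$ as in \eqref{eq:g_d}. Hence the whole implication reduces to the single uniform bound $\sum_{d=1}^{P-1}(P-d)|g_d|^2 \leq B(\bm{K})$ valid on all of $\tilde{\mathcal{A}}$, after which $B(\bm{K}) < 1$ immediately forces $\|\bm{Z}^{\operatorname{H}}\bm{B}^{-\operatorname{H}}\|_F^2 < 1$ and therefore $\bm{\Gamma}(\bm{\alpha}) \in \mathbb{S}^P_{++}$.

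To establish this bound I would first control each $g_d$ termwise. Applying the triangle inequality to \eqref{eq:g_d} gives $|g_d| \leq \sum_{j=1}^d |\alpha_{P-j}/\alpha_0|\,|F_{d-j}(-\overline{\bm{\alpha}_{\geq1}}/\alpha_0)|$. The crux is then a structural property of the sequence \eqref{eq:fibonacci_definition}: an easy induction on $i$ shows that each $F_i(\bm{r})$ is a polynomial in the entries of $\bm{r}$ with nonnegative coefficients. From this two facts follow, both by the same induction together with the triangle inequality: first, $|F_i(\bm{r})| \leq F_i(|\bm{r}|)$ for any complex argument, and second, $F_i$ is monotonically nondecreasing in each nonnegative argument. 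Since on $\tilde{\mathcal{A}}$ one has $|\alpha_k/\alpha_0| \leq K_k$ by \eqref{eq:A_set_definition}, combining these facts yields $|F_{d-j}(-\overline{\bm{\alpha}_{\geq1}}/\alpha_0)| \leq F_{d-j}(|\bm{\alpha}_{\geq1}|/\alpha_0) \leq F_{d-j}(\bm{K})$, and likewise $|\alpha_{P-j}/\alpha_0| \leq K_{P-j}$. Substituting back gives $|g_d| \leq \sum_{j=1}^d K_{P-j} F_{d-j}(\bm{K})$, and squaring, weighting by $(P-d)$, and summing reproduces exactly $B(\bm{K})$ in \eqref{eq:B_of_K}. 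This closes the implication.

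For the existence claim I would invoke continuity. The function $B$ defined in \eqref{eq:B_of_K} is a polynomial in $\bm{K}$, hence continuous, and at $\bm{K} = \bm{0}$ it vanishes: each inner sum $\sum_{j=1}^d K_{P-j} F_{d-j}(\bm{K})$ carries an explicit factor $K_{P-j} \to 0$, so $B(\bm{0}) = 0$. By continuity there is therefore a neighborhood of $\bm{0}$ on which $B(\bm{K}) < 1$, and any strictly positive $\bm{K} \in \mathbb{R}_{++}^{P-1}$ in this neighborhood (for instance all entries equal to a sufficiently small common value) witnesses $\exists\, \bm{K} : B(\bm{K}) < 1$.

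The main obstacle is the structural lemma on the generalized Fibonacci sequence: everything hinges on the nonnegativity of the coefficients of $F_i$, since this single fact simultaneously delivers the domination $|F_i(\bm{r})| \leq F_i(|\bm{r}|)$ needed to pass from the complex parameters to their magnitudes and the monotonicity needed to replace those magnitudes by the box bounds $K_k$. Once that induction is in place, the rest is a routine triangle-inequality estimate followed by a continuity argument, so I would be careful to state and prove the induction cleanly before assembling the final bound.
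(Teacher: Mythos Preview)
Your proposal is correct and follows essentially the same route as the paper: both bound $|g_d|$ via the triangle inequality, use the recursive structure of \eqref{eq:fibonacci_definition} to obtain $|F_i(\bm{r})|\le F_i(|\bm{r}|)$ together with componentwise monotonicity of $F_i$ on $\mathbb{R}_+^{P-1}$, substitute the box bounds $|\alpha_k/\alpha_0|\le K_k$, and conclude $\|\bm{Z}^{\operatorname{H}}\bm{B}^{-\operatorname{H}}\|_F^2\le B(\bm{K})$ before invoking continuity and $B(\bm{0})=0$ for existence. The only cosmetic difference is that you phrase the key Fibonacci property as ``polynomial with nonnegative coefficients'' while the paper argues it directly by iterating the triangle inequality through the recursion; these are the same induction.
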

\begin{proof}
See Appendix \ref{proof_theorem_K}.
\end{proof}
The property of $B(\bm{K})$ in \eqref{eq:B_property} provides a means to determine specific values for $\bm{K}$. To do so, we propose introducing a functional dependency between the entries of $\bm{K}$, which not only simplifies the computation but also provides the possibility to incorporate additional prior knowledge, as discussed in more detail in Section \ref{sec:tuning_bounds}. However, to ensure the existence of the resulting box constraints, we limit ourselves to a specific class of functions. The formal statement of this restriction is provided in the following Lemma.

\begin{lemma}
\label{lem:lemma_f}
Let $f$ be a function with the following properties
\begin{itemize}
    \item[a)] $f(\eta,i): \mathbb{R}_+ \times \mathbb{N} \rightarrow \mathbb{R}_+$,
    \item[b)] $f(0,i) = 0\ \mathrm{for}\ \mathrm{all}\ i \in \mathbb{N}$,
    \item[c)] $f(\eta,i)$ is continuous and monotonically increasing in $\eta$.
\end{itemize}
Additionally, let $\bm{f}(\eta) = [f(\eta,1),f(\eta,2),\ldots,f(\eta,P-1)]^{\mathrm{T}}$, $\bm{\Gamma}(\bm{\alpha})$ satisfy the \ac{GS} decomposition in \eqref{eq:GS_formula}, $\tilde{\mathcal{A}}$ be defined according to \eqref{eq:A_set_definition} with 
\begin{equation}
    \label{eq:K_func_dep}
    K_i = f(\eta,i)\ \mathrm{for}\ \mathrm{all}\ i=1,\ldots,P-1,
\end{equation}
and $B_f(\eta)$ be defined as the concatenation of $B(\cdot)$ in \eqref{eq:B_of_K} and $\bm{f}(\eta)$, i.e., $B_f(\eta) = B(\bm{f}(\eta))$. Then
\begin{equation}
    \exists \eta \in \mathbb{R}_{++}: B_f(\eta) < 1
\end{equation}
with $B_f(\eta)$ exhibiting the following property 
\begin{equation}
    (B_f(\eta) < 1) \Rightarrow (\bm{\Gamma}(\bm{\alpha})\in \mathbb{S}^P_{++}\ \mathrm{for}\ \mathrm{all}\ \bm{\alpha} \in \tilde{\mathcal{A}}).
\end{equation}
\end{lemma}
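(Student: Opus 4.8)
The plan is to separate the statement into its two assertions and to note that the implication is, up to a change of variable, already established by Theorem~\ref{th:theorem_K}. Since \eqref{eq:K_func_dep} sets $K_i = f(\eta,i)$, the definition $B_f(\eta) = B(\bm{f}(\eta))$ means that for each fixed $\eta$ we are simply evaluating $B(\cdot)$ at the particular bound vector $\bm{K} = \bm{f}(\eta)$, whose entries are nonnegative by property a) and hence form an admissible bound vector for Theorem~\ref{th:theorem_K}. Applying property \eqref{eq:B_property} with this $\bm{K}$ immediately gives $(B_f(\eta) < 1) \Rightarrow (\bm{\Gamma}(\bm{\alpha}) \in \mathbb{S}^P_{++}$ for all $\bm{\alpha} \in \tilde{\mathcal{A}})$, so no further work is needed for the implication. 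The genuine content is therefore the existence claim $\exists \eta \in \mathbb{R}_{++}: B_f(\eta) < 1$, which I would prove by a continuity argument at the origin.

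First I would establish that $B_f$ is continuous on $\mathbb{R}_+$. By the recursion \eqref{eq:fibonacci_definition}, each $F_{d-j}(\bm{K})$ is a polynomial in the entries of $\bm{K}$, so the expression $B(\bm{K})$ in \eqref{eq:B_of_K} is a polynomial and thus continuous in $\bm{K}$. Property c) makes every component $f(\eta,i)$ continuous in $\eta$, so $\bm{f}$ is continuous and the composition $B_f(\eta) = B(\bm{f}(\eta))$ is continuous on $\mathbb{R}_+$. Next I would evaluate the limit at the origin: property b) gives $f(0,i) = 0$ for all $i$, hence $\bm{f}(0) = \bm{0}$, and continuity yields $\lim_{\eta \to 0^+} B_f(\eta) = B(\bm{0})$.

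The one computation that needs care is verifying $B(\bm{0}) = 0$. Inspecting the inner sum $\sum_{j=1}^{d} K_{P-j} F_{d-j}(\bm{K})$ in \eqref{eq:B_of_K}, every summand carries an explicit factor $K_{P-j}$ with index $P-j \geq P-d \geq 1$, because $1 \le j \le d \le P-1$. This factor vanishes at $\bm{K} = \bm{0}$ irrespective of the value of $F_{d-j}(\bm{0})$; in particular the $j=d$ term is $K_{P-d}\,F_0(\bm{0}) = K_{P-d}$, which still vanishes even though $F_0(\bm{0}) = 1$. Hence the inner sum, its square, and the entire weighted sum all vanish, giving $B(\bm{0}) = 0 < 1$. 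By continuity of $B_f$ at $0$ there is a $\delta > 0$ with $B_f(\eta) < 1$ for all $\eta \in [0,\delta)$, and any such $\eta \in (0,\delta)$ establishes the existence claim.

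Finally I would note that $\bm{K} \mapsto B(\bm{K})$ is nonnegative and monotonically increasing in each nonnegative argument, since all Fibonacci terms and hence all summands are monotone in $|\bm{K}|$; combined with property c) this makes $B_f$ monotonically increasing in $\eta$, so that the feasible set $\{\eta : B_f(\eta) < 1\}$ is an interval $[0,\eta^\ast)$. This monotone interval structure is not needed for existence but is convenient for the hyperparameter tuning discussed in Section~\ref{sec:tuning_bounds}. I do not anticipate a real obstacle here: the argument is continuity of a polynomial composed with a continuous map, and the only step requiring attention is the bookkeeping showing that every term of $B$ contains at least one $K$-factor, so that $B$ vanishes at the origin.
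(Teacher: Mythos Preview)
Your proposal is correct and follows essentially the same route as the paper's own proof: reduce the implication to Theorem~\ref{th:theorem_K} via $\bm{K}=\bm{f}(\eta)$, and obtain existence from continuity of $B_f$ together with $B_f(0)=B(\bm{0})=0$. Your version is in fact more explicit than the paper's, which simply quotes the monotonicity and the value $B(\bm{0})=0$ already recorded in the proof of Theorem~\ref{th:theorem_K}; your careful check that every summand in \eqref{eq:B_of_K} carries a factor $K_{P-j}$ (even the $j=d$ term with $F_0=1$) is exactly the right bookkeeping.
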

\begin{proof}
See Appendix \ref{proof_lemma_f}.
\end{proof}
Lemma \ref{lem:lemma_f} simplifies the task of determining the vector-valued $\bm{K}$ to deciding for a single value $\eta$ to fully determine the constraint set. We propose to select the largest possible $\eta$ for which $B_f(\eta) < 1$, since this leads to the largest possible constraint set within all possible sets satisfying \eqref{eq:A_set_definition} and \eqref{eq:K_func_dep}. The function $B_f(\eta)$ is constructed to be monotonically increasing in $\eta$ and, thus, the parameter $\eta$ can be precomputed by bisection, i.e., we search for a $\eta$ for which
\begin{equation}
     1 - \epsilon_\eta \leq B_f(\eta) < 1
\end{equation}
holds with $\epsilon_\eta$ being a small positive tolerance parameter. In the case of no prior knowledge, the function $f$ could be chosen to constrain all parameters equally, i.e., 
\begin{equation}
    f(\eta,i) = \eta.
\end{equation} 
In case of prior knowledge, however, it is possible to choose more sophisticated functions $f$, which will be discussed in Section \ref{sec:tuning_bounds}. For solely real-valued $\alpha_i$ ($i > 0$), we obtain the following \ac{OP} by integrating the box constraints \eqref{eq:K_bound} into \eqref{eq:repara_likelihood}:
\begin{equation}
\begin{aligned}
\label{eq:box_opt}
    \hat{\bm{\alpha}}(\mathcal{D}) = & \argmax\limits_{\bm{\alpha}}\ \mathcal{L}_{\mathcal{D}}(\bm{\alpha}) \\& \! \! \! \! \mathrm{s.t.} \ \ \epsilon_0 - \alpha_0 \leq 0 \\ & \quad \
     \alpha_i - K_i\alpha_0 \leq 0\ \mathrm{for}\ \mathrm{all}\ i = 1,\ldots,P-1 \\ & \quad
     - \alpha_i - K_i\alpha_0 \leq 0\ \mathrm{for}\ \mathrm{all}\ i = 1,\ldots,P-1,
\end{aligned}
\end{equation}
which can be solved by, e.g., a projected gradient descent. To ensure the box constraints \eqref{eq:K_bound} in the complex-valued case, one possibility is to bound the imaginary and real part of $\alpha_i/\alpha_0$ in absolute value equally by $K_i/2$.

\section{Relation to Autoregressive Processes And Hyperparameter Tuning}
\label{sec:hyperparameter_section}
\subsection{Relation to Autoregressive Processes}
\label{sec:relation}
Let 
\begin{equation}
    \label{eq:AR_process}
    X_t = \sum_{i=1}^{w}a_i X_{t-i} + e_t
\end{equation}
be an \ac{AR}$(w)$ process of order $w$ with \ac{AWGN} $e_t \sim \mathcal{N}_{\mathbb{C}}(0,\sigma^2)$. The second moments of \ac{AR} processes can be put into relation with its parameters by the Yule-Walker equations
\begin{equation}
    \label{eq:yule_walker}
    c(j) = \sum_{i=1}^{w} a_i c(i-j) + \delta(j)\sigma^2\ \mathrm{for}\ \mathrm{all}\ j = 0,\ldots,w
\end{equation}
with $c(j)$ being the autocovariance sequence \cite[Sec. 2.1.3]{Kirchgaessner2007}. By rewriting \eqref{eq:yule_walker} as a system of linear equations in matrix vector form with $\bm{C}_{ij} = c(i-j)$ being the \ac{CM} of a $w+1$-dimensional sample of the process and by inserting the \ac{GS} parameterization \eqref{eq:GS_formula} for $\bm{C}^{-1}$, we observe the following relation
\begin{align}
    \label{eq:relation_GS_A}
    \alpha_0 = &  \frac{1}{\sigma^2},\ \ 
    \alpha_i = - \frac{a_i}{\sigma^2}\ \mathrm{for}\ \mathrm{all}\ i = 1,\ldots,w.
\end{align}
The relationship between the \ac{GS} parameterization and the parameters of an \ac{AR} process was initially explored in \cite{Kailath1975}, establishing an interpretability of the former in terms of the latter. Building upon this interpretability, we leverage it to propose techniques for tuning the hyperparameters of our proposed estimators.
\subsection{Order Tuning}
\label{sec:order_tuning}
Irrespective of the chosen constraint set from Section \ref{sec:psd_constraints}, we propose using the \ac{BIC} for tuning the number of nonzero $\alpha_i$ ($i > 0)$.  The \ac{BIC} is commonly utilized to select the statistical model for likelihood estimation out of a set of potential models, whether there is only $N=1$ or $N > 1$ observations \cite[Sec. 5.5.2]{Brockwell2016},\cite[Sec. 4.4.1]{Bishop2007}. For our setup and $N > 1$, the \ac{BIC} reads as

\begin{equation}
\label{eq:BIC_choice}
    i_{\mathrm{BIC}} = \argmin_{i \in \{1,\ldots,P\}} i \log N - \mathcal{L}_i(\hat{\bm{\alpha}}_i(\mathcal{D}))
\end{equation}
with $\mathcal{L}_i(\hat{\bm{\alpha}}_i(\mathcal{D}))$ being the log-likelihood evaluated at the estimator $\hat{\bm{\alpha}}_i(\mathcal{D}) \in \mathbb{R}^{i}$ obtained by optimizing the \ac{OP} of choice in Section \ref{sec:psd_constraints}. To evaluate the $i$th log-likelihood, we set the last $P-i$ parameters $\{\alpha_h\}_{h=i}^{P-1}$ in the \ac{GS} decomposition \eqref{eq:GS_formula} to zero and optimize solely over the remaining $i$ parameters $\{\alpha_h \}_{h=0}^{i-1}$. Based on the connection between the \ac{GS} parameterization and \ac{AR} processes, cf. Section \ref{sec:relation}, setting only the first $i$ parameters to nonzero is equivalent to fitting an \ac{AR} process of order $i-1$ to the available dataset. Moreover, using the \ac{BIC} for order tuning results in a duality between our estimators and banding estimators, which is explained in Section \ref{sec:duality}. 

\subsection{Box Constraint Tuning}
\label{sec:tuning_bounds}

When employing the box constraints discussed in Section \ref{sec:box_constraints}, the function $f(\eta,i)$, detailed in Lemma \ref{lem:lemma_f}, is an additional hyperparameter and must be determined. The \ac{GS} parameters are proportional to those of an \ac{AR} process (cf. Section \ref{sec:relation}), so constraining the former implies constraining the latter. Consequently, by fixing a functional dependency between the bounds for the parameters of the \ac{GS} decomposition, we implicitly assume the same functional dependency between the bounds for the parameters of the \ac{AR} process. Therefore, by setting $f(\eta,i)$, we effectively exclude certain parameter constellations of the \ac{AR} process, which we fit to the given dataset. 

The way this insight can be incorporated in determining $f(\eta,i)$ depends on the degree of prior knowledge. If, e.g., the concrete functional dependency of the underlying \ac{AR} process is pre-known, we can decide for one single $f(\cdot,\cdot)$. However, in cases where only limited prior knowledge is given or it is unknown if the true process exhibits a perfect \ac{AR} structure, relying on one $f(\cdot,\cdot)$ might be too restrictive. In such cases, we propose to decide for a class of functions $\{f_m(\eta_m,i)\}_{m=1}^M$ and to select the one leading to the estimator which maximizes a certain quality measure. One example is the class of exponentially decreasing functions, expressed as
\begin{equation}
f_m(\eta_m,i) = \eta_m \cdot \mathrm{e}^{-\lambda_m i}
\end{equation}
with coefficient $\lambda_m > 0$. This class of functions is motivated by a general understanding of typical \ac{AR} processes, where the parameters tend to decrease with growing $i$. One can take the likelihood value of the resulting estimators as quality measure. Alternatively, one can also apply $k$-fold cross validation \cite[Sec. 1.3]{Bishop2007}. 

\section{Conditional Likelihood Estimation}
\label{sec:approximate_likelihood}

In Section \ref{sec:problem_formulation}, we introduced the exact unconditioned log-likelihood $\mathcal{L}_{\mathcal{D}}(\bm{\alpha})$ for estimating the \ac{GS}, cf. \eqref{eq:loglikelihood_def}, and implicitly the \ac{AR} parameters of any fixed order $w$. On the other hand, it has been shown that maximizing the conditional likelihood of one observed sample, i.e., $N=1$, conditioned on its first $w$ (i.e., the order of the fitted \ac{AR} process) entries comes with the benefit of providing a closed form solution of the estimated \ac{AR} parameters \cite[Sec. 5.3]{Hamilton1994}. However, this estimate cannot guarantee the corresponding \ac{ICM} estimate to be \ac{PD} and, thus, no well-conditioned Toeplitz structured \ac{CM} estimator can be directly inferred. In a first step towards combining this insight with our results from Section \ref{sec:psd_constraints} and \ref{sec:hyperparameter_section} to yield a computationally cheap closed form \ac{PD} estimator of the \ac{ICM} as well as the \ac{CM}, we adapt the setting to the complex-valued and more general case of having $N \geq 1$ samples. More precisely, let $\mathcal{D}_C = \{ \bm{x}_{< w}^{(n)} \}_{n=1}^N$ and $\mathcal{D}_L = \{ \bm{x}_{\geq w}^{(n)} \}_{n=1}^N$, then the conditional log-likelihood $\mathcal{L}_{L | C}(\bm{a},\sigma^2)$ of $\mathcal{D}_L$ conditioned on $\mathcal{D}_C$ with respect to the \ac{AR} parameters $\bm{a} = [a_1,\ldots,a_w]^{\operatorname{T}}$ and $\sigma^2$, cf. Section \ref{sec:relation}, reads as \cite[Sec. 5.3]{Hamilton1994}
\begin{equation}
    \label{eq:conditional_log_likelihood}
    \mathcal{L}_{L|C} = - (P-w)\log(\sigma^2) - \frac{\sum_{n=1}^{N}\|\bm{x}^{(n)}_{\geq w} - \bm{\mu}_c^{(n)}(\bm{a})\|_2^2}{\sigma^2 N}
\end{equation}
with the $i$th entry of $\bm{\mu}_c^{(n)}$ being $\sum_{m=1}^{w}a_m x^{(n)}_{i-m}$ and $x^{(n)}_{i-m}$ representing the $(i-m)$th entry of $\bm{x}^{(n)}$. By extending the derivation of the closed form estimator for the \ac{AR} parameters from \cite[Sec. 6.5.1]{Kay1988}, we establish the following relation.

\begin{lemma}
\label{lem:lemma_con}
Let $\mathcal{L}_{L|C}(\bm{a},\sigma^2)$ be defined according to \eqref{eq:conditional_log_likelihood}. Then, the maximizers $\hat{\bm{a}}$ and $\hat{\sigma}^2$ of $\mathcal{L}_{L|C}(\bm{a},\sigma^2)$ satisfy
\begin{equation}
    \label{eq:closed_form_AR}
    \begin{aligned}
        \hat{\bm{a}} = \tilde{\bm{S}}^{-1}_{\geq 1, \geq 1}\tilde{\bm{S}}_{\geq 1,0},\ \ 
        \hat{\sigma}^2 = \frac{1}{P-w}\left( \tilde{\bm{S}}_{0,0} - \tilde{\bm{S}}_{\geq 1,0}^{\operatorname{H}}\hat{\bm{a}} \right),
    \end{aligned}
\end{equation}
where $\tilde{\bm{S}} \in \mathbb{C}^{w+1 \times w+1}$ is defined as
\begin{equation}
    \label{eq:S_tilde}
    \tilde{\bm{S}}_{j,l} = \sum_{t = w}^{P-1} \bm{S}_{t-l,t-j}
\end{equation}
with $\bm{S}$ being the \ac{SCM}.
\end{lemma}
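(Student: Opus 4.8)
The plan is to exploit the separable structure of $\mathcal{L}_{L|C}$ in $\bm{a}$ and $\sigma^2$, cf. \eqref{eq:conditional_log_likelihood}, and to recognize the inner problem as a complex least-squares fit. First I would profile out $\sigma^2$: writing $Q(\bm{a}) = \sum_{n=1}^N \|\bm{x}^{(n)}_{\geq w} - \bm{\mu}_c^{(n)}(\bm{a})\|_2^2$ for the residual sum of squares, the likelihood reads $\mathcal{L}_{L|C} = -(P-w)\log\sigma^2 - Q(\bm{a})/(\sigma^2 N)$. Setting $\partial \mathcal{L}_{L|C}/\partial \sigma^2 = 0$ for fixed $\bm{a}$ yields $\sigma^2 = Q(\bm{a})/((P-w)N)$, and substituting this back shows the profiled objective is strictly decreasing in $Q(\bm{a})$. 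Hence maximizing $\mathcal{L}_{L|C}$ is equivalent to minimizing $Q(\bm{a})$ over $\bm{a}\in\mathbb{C}^w$, after which $\hat{\sigma}^2$ is obtained by re-inserting the minimizing value of $Q$.

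Next I would cast $Q(\bm{a})$ into matrix form. Collecting the targets $x_i^{(n)}$ (for $i = w,\ldots,P-1$ and $n=1,\ldots,N$) into a vector $\bm{y}$ and the associated regressors $[x_{i-1}^{(n)},\ldots,x_{i-w}^{(n)}]^{\operatorname{T}}$ into the rows of a design matrix $\bm{\Phi}$, we have $Q(\bm{a}) = \|\bm{y} - \bm{\Phi}\bm{a}\|_2^2$. Since $\bm{\mu}_c^{(n)}$ is linear in $\bm{a}$ without conjugation, I would differentiate with respect to $\overline{\bm{a}}$ in the sense of Wirtinger calculus to obtain the normal equations $\bm{\Phi}^{\operatorname{H}}\bm{\Phi}\,\hat{\bm{a}} = \bm{\Phi}^{\operatorname{H}}\bm{y}$, which hold verbatim in the complex-valued case; convexity of the quadratic $Q$ guarantees that the stationary point is the global minimizer.

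The crux is identifying the Gram matrix and the cross-correlation vector with the matrix $\tilde{\bm{S}}$ of \eqref{eq:S_tilde}. Expanding $(\bm{\Phi}^{\operatorname{H}}\bm{\Phi})_{jm} = \sum_{n}\sum_{i=w}^{P-1}\overline{x_{i-j}^{(n)}}\,x_{i-m}^{(n)}$ and recalling $\bm{S}_{ab} = \frac{1}{N}\sum_n x_a^{(n)}\overline{x_b^{(n)}}$, I would match this double sum to $\tilde{\bm{S}}_{j,l} = \sum_{t=w}^{P-1}\bm{S}_{t-l,t-j}$ and conclude $\bm{\Phi}^{\operatorname{H}}\bm{\Phi} = N\,\tilde{\bm{S}}_{\geq 1,\geq 1}$; the analogous computation for the target column ($m=0$) gives $\bm{\Phi}^{\operatorname{H}}\bm{y} = N\,\tilde{\bm{S}}_{\geq 1,0}$. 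The factor $N$ cancels in the normal equations, yielding $\hat{\bm{a}} = \tilde{\bm{S}}^{-1}_{\geq 1,\geq 1}\tilde{\bm{S}}_{\geq 1,0}$. I expect this bookkeeping to be the main obstacle: with the substitution $t=i$, the index shift and the conjugation must line up so that the conjugate in $\bm{S}_{t-l,t-j}$ lands on the regressor factor (index $j$), and one must carefully reconcile the row/column transposition implicit in $\tilde{\bm{S}}_{j,l}$ versus $\bm{S}_{t-l,t-j}$.

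Finally, for $\hat{\sigma}^2$ I would invoke the orthogonality of the least-squares residual. At the optimum the cross terms cancel, leaving $Q(\hat{\bm{a}}) = \bm{y}^{\operatorname{H}}\bm{y} - \bm{y}^{\operatorname{H}}\bm{\Phi}\hat{\bm{a}}$. Since $\bm{y}^{\operatorname{H}}\bm{y} = \sum_{n}\sum_{i=w}^{P-1}|x_i^{(n)}|^2 = N\tilde{\bm{S}}_{0,0}$ and $\bm{y}^{\operatorname{H}}\bm{\Phi} = (\bm{\Phi}^{\operatorname{H}}\bm{y})^{\operatorname{H}} = N\,\tilde{\bm{S}}_{\geq 1,0}^{\operatorname{H}}$, this gives $Q(\hat{\bm{a}}) = N(\tilde{\bm{S}}_{0,0} - \tilde{\bm{S}}_{\geq 1,0}^{\operatorname{H}}\hat{\bm{a}})$. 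Dividing by $(P-w)N$ according to the profiling step recovers the claimed $\hat{\sigma}^2 = \frac{1}{P-w}(\tilde{\bm{S}}_{0,0} - \tilde{\bm{S}}_{\geq 1,0}^{\operatorname{H}}\hat{\bm{a}})$, completing the argument.
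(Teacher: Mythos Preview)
Your proposal is correct and follows essentially the same route as the paper: both separate the optimization in $\bm{a}$ and $\sigma^2$, obtain the normal equations via Wirtinger differentiation of the quadratic residual, identify the resulting sums with the entries of $\tilde{\bm{S}}$, and simplify $\hat{\sigma}^2$ using the orthogonality relation implied by the normal equations. The only cosmetic difference is that you profile out $\sigma^2$ first and package the sums into the design-matrix form $\|\bm{y}-\bm{\Phi}\bm{a}\|_2^2$, whereas the paper writes the same computations out componentwise; the index bookkeeping you flag as the main obstacle is exactly what the paper verifies in \eqref{eq:later_identity_conlikeli} and \eqref{eq:sigma_identity}.
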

\begin{proof}
See Appendix \ref{proof_lemma_con}.
\end{proof}
The disadvantage of the \ac{AR} parameter estimates in \eqref{eq:closed_form_AR} for estimating the \ac{CM} is the missing guarantee of positive definiteness of the corresponding \ac{ICM}. For increasing dimension $P$, however, the distributions of the exact and conditional likelihood estimator converge to each other \cite[Sec. 5.2]{Hamilton1994}. This motivates the assumption that for very large $P$, the conditional likelihood estimator also provides positive definiteness, which is implicitly assumed in, e.g., \cite{David}. However, there is no mathematical guarantee for finite dimensions, and for small $P$ or close-to-unstable \ac{AR} processes, this assumption does not hold \cite[Sec. 6.5.2]{Kay1988}. To overcome this issue, we propose to utilize the derived box constraints in Section \ref{sec:box_constraints} and to project the corresponding \ac{GS} parameters resulting from \eqref{eq:closed_form_AR} and \eqref{eq:relation_GS_A} onto this constraint set, i.e.,
\begin{equation}
    \label{eq:LS_est}
    \hat{\alpha}_{0,\mathrm{PLS}} = \frac{1}{\hat{\sigma}^2},\ \ \hat{\bm{\alpha}}_{\geq 1,\mathrm{PLS}} = \mathrm{Proj}_{\bm{K}}\left( - \frac{1}{\hat{\sigma}^2}\hat{\bm{a}}\right),
\end{equation}
where $\mathrm{Proj}_{\bm{K}}(\cdot)$ is the projection onto the box constraints defined by \eqref{eq:K_bound}, and $\hat{\sigma}^2$ and $\hat{\bm{a}}$ are given by \eqref{eq:closed_form_AR}. The corresponding \ac{ICM} and \ac{CM} estimate can then be computed from \eqref{eq:GS_formula} and are referred to as \ac{PLStext} estimates. This method has not only the key advantage that it provides a computationally cheap closed-form estimator, which guarantees positive definiteness in any case, but also allows applying the proposed order tuning as well as the box constraint tuning discussed in Section \ref{sec:hyperparameter_section}. On the other hand, it should be noted that this method cannot be theoretically guaranteed to yield a local optimum of either the exact or the conditional likelihood.

\section{Duality to Banding Estimators}
\label{sec:duality}
The introduced hyperparameter tuning method in Section \ref{sec:order_tuning} fits an \ac{AR} process with the most suitable order to the given dataset. In this Section, we discuss this method's regularization effect by considering the resulting imposed \ac{ICM} structure and identify a certain duality to banding estimators, introduced in Section \ref{sec:banding_tapering}.

The $i$,$j$th entry in the \ac{ICM} of a random vector $\bm{z}$ is zero if and only if its $i$th and $j$th entry are uncorrelated given all other entries \cite{FriedmanHastie2018}. Thus, by considering a $P$-dimensional sample from an \ac{AR}($w$) process, with $w < P$, the corresponding \ac{ICM} is banded by a bandwidth of $w$. In consequence, by only allowing $\{\alpha_h\}_{h=0}^{i_{\mathrm{BIC}}-1}$ to be nonzero and due to their relationship to the \ac{AR} parameters in \eqref{eq:relation_GS_A}, our proposed order tuning method in Section \ref{sec:order_tuning} effectively bands the estimated \ac{ICM} by a bandwidth of $i_{\mathrm{BIC}}-1$. In contrast, when considering banding estimators, banding is applied to an estimator of the \ac{CM}, i.e., $\bm{S}_{\mathrm{avg}}$ in \eqref{eq:sCov_avg}. The $i$,$j$th entry in the \ac{CM} of a random vector $\bm{z}$ is zero if and only if its $i$th and $j$th entry are uncorrelated. When considering a \ac{MA} process, defined by $X_t = e_t + \sum_{i=1}^{w} b_ie_{t-i}$ with $e_{\tau}$ being \ac{AWGN} and parameters $\{b_i\}_{i=1}^{w}$, the corresponding \ac{CM} of a $P$-dimensional sample exhibits a band structure of bandwidth $w$. Therefore, it is possible to interpret the process of banding $\bm{S}_{\mathrm{avg}}$ by a bandwidth $w$ as the fitting of a \ac{MA}($w$) process of order $w$ to the given dataset.

As a result, while banding estimators fit a \ac{MA} process to the given dataset and impose a banding structure on the estimated \ac{CM}, our proposed estimators fit an \ac{AR} process and impose a banding structure on the estimated \ac{ICM}. Due to this duality, we can interpret the regularization effect of both estimators in the same but dual manner. Banding estimators reduce the \ac{MSE} in situations where the true \ac{CM} possesses a band structure, i.e., in case of a \ac{MA} process. However, even in cases where the \ac{CM} is not inherently banded, this operation introduces a regularization effect by introducing a bias and leveraging the bias-variance trade-off. The same argumentation holds for our proposed estimators in terms of the \ac{ICM}. 

\section{Computational Complexity}
\label{sec:computational_complexity}
The complexity plays a key role for many applications which require the estimation of \acp{CM}. This can be due to a high dimensionality but also due to the necessity for low-latency real-time computations \cite{Neumann2018}. In the following, we will analyze the complexity of optimizing the exact log-likelihood with the box constraints introduced in Section \ref{sec:box_constraints}. We solve it with a projected gradient descent, which consists of two steps constituting the computational complexity, i.e., determining the search direction and the step size.

\subsection{Determining the Search Direction}

We start with computing the gradient of the objective $\mathcal{L}_{\mathcal{D}}(\bm{\alpha})$ defined in \eqref{eq:loglikelihood_def}.
The derivative with respect to $\alpha_0$ is given by
\begin{equation}
    \label{eq:grad_0}
    \frac{\partial}{\partial \alpha_0} \mathcal{L}_{\mathcal{D}}(\bm{\alpha}) = \frac{1}{\alpha_0} \tr((\bm{\Gamma}^{-1} - \bm{S})(\bm{B} + \bm{B}^{\operatorname{H}} - \bm{\Gamma})).
\end{equation}
The remaining derivatives are in the real-valued case 
\begin{equation}
    \label{eq:grad_i}
    \frac{\partial}{\partial \alpha_i} \mathcal{L}_{\mathcal{D}}(\bm{\alpha}) = \frac{2}{\alpha_0} \tr((\bm{\Gamma}^{-1} - \bm{S})(\bm{B}(\operatorname{\mathbf{E}}^{i})^{\operatorname{T}} - \bm{Z}(\operatorname{\mathbf{E}}^{P-i})^{\operatorname{T}}))
\end{equation}
and in the complex-valued case
\begin{equation}
\begin{aligned}
    \label{eq:grad_i_complex}
    \frac{\partial}{\partial \alpha_i} \mathcal{L}_{\mathcal{D}}(\bm{\alpha}) = \frac{1}{\alpha_0} \tr((\bm{\Gamma}^{-1} - \bm{S} )^{\operatorname{T}}\overline{\bm{B}}(\operatorname{\mathbf{E}}^{i})^{\operatorname{T}})\\ - \tr(( \bm{\Gamma}^{-1} - \bm{S} )\bm{Z}(\operatorname{\mathbf{E}}^{P-i})^{\operatorname{T}})
    \end{aligned}
\end{equation}
with $i \in \{1\ldots,P-1\}$, $\bm{S}$ being the \ac{SCM}, $\bm{B}$ and $\bm{Z}$ given by \eqref{eq:B_def} and \eqref{eq:Z_def}, and $\operatorname{\mathbf{E}}$ being the shift-down matrix. 
Based on the connection between the \ac{GS} and the \ac{AR} parameters in \eqref{eq:relation_GS_A}, we can compute $\bm{\Gamma}(\bm{\alpha})^{-1}$ by the reversed Levinson algorithm, taking $\mathcal{O}(P^2)$ \acp{FLOP} \cite[Sec. 6.3.4]{Kay1988}. To determine the gradient's complexity, we establish the following Lemmata.

\begin{lemma}
\label{lem:lemma_tr}
Let $\bm{C}_T$ be a Hermitian Toeplitz matrix and let $\bm{D}$ be a lower triangular Toeplitz matrix and $k \in \{0,\ldots,P-1\}$. If $\bm{c} = [c_0,\ldots,c_{P-1}]^{\operatorname{T}}$ and $\bm{d} = [d_0,\ldots,d_{P-1}]^{\operatorname{T}}$ denote the first columns of $\bm{C}_T$ and $\bm{D}$, respectively, then,
\begin{equation}
    \label{eq:trace_term_lem}
    \tr(\bm{C}_T\bm{D}(\operatorname{\mathbf{E}}^k)^{\operatorname{T}}) = \sum_{m=0}^{P-1} \min\{P-k,P-m\} d_m c_{|k-m|}.
\end{equation}
\end{lemma}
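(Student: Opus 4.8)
The plan is to reduce the trace to a single diagonal sum, expand the matrix product using the triangular and Toeplitz structure of the factors, and then carry out a careful interchange in the order of summation whose only nontrivial content is counting the surviving terms.

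First I would unpack $(\operatorname{\mathbf{E}}^k)^{\operatorname{T}}$. Since $\operatorname{\mathbf{E}}^k$ has its only nonzero entries on the $k$th subdiagonal, $(\operatorname{\mathbf{E}}^k)^{\operatorname{T}}$ is supported on the $k$th superdiagonal, so for any matrix $\bm{M}$ one has $\tr(\bm{M}(\operatorname{\mathbf{E}}^k)^{\operatorname{T}}) = \sum_{j=0}^{P-1-k} \bm{M}_{j+k,\,j}$; that is, the trace selects the $k$th subdiagonal of $\bm{M}$. Applying this with $\bm{M} = \bm{C}_T\bm{D}$ reduces the left-hand side to $\sum_{j=0}^{P-1-k}(\bm{C}_T\bm{D})_{j+k,\,j}$.

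Next I would expand each entry as $(\bm{C}_T\bm{D})_{j+k,\,j} = \sum_{l}(\bm{C}_T)_{j+k,\,l}\,(\bm{D})_{l,\,j}$. Because $\bm{D}$ is lower triangular Toeplitz, $(\bm{D})_{l,\,j}$ vanishes unless $l \geq j$ and equals $d_{l-j}$ otherwise; substituting $l = j+m$ turns the inner sum into $\sum_{m=0}^{P-1-j}(\bm{C}_T)_{j+k,\,j+m}\,d_m$. The Toeplitz property of $\bm{C}_T$ makes $(\bm{C}_T)_{j+k,\,j+m}$ depend only on the lag $k-m$, and Hermitian symmetry identifies this coefficient with $c_{|k-m|}$: for $m \leq k$ it is the first-column entry $c_{k-m}$, and for $m > k$ it is its conjugate counterpart, which the statement records as $c_{|k-m|}$. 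This produces the double sum $\sum_{j=0}^{P-1-k}\sum_{m=0}^{P-1-j} d_m\, c_{|k-m|}$.

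Finally I would interchange the order of summation, which is the step carrying all the real work. For fixed $m$ the constraints $0 \leq j \leq P-1-k$ and $m \leq P-1-j$ force $0 \leq j \leq \min\{P-1-k,\,P-1-m\}$, so the number of admissible indices $j$ is exactly $\min\{P-k,\,P-m\}$, while the summand $d_m c_{|k-m|}$ is independent of $j$. Factoring it out yields $\sum_{m=0}^{P-1}\min\{P-k,P-m\}\,d_m c_{|k-m|}$, which is the claim. I expect the main obstacle to be purely this bookkeeping: getting the two-sided index ranges right so that the count collapses to the single $\min$, and correctly absorbing the Hermitian conjugation into the $c_{|k-m|}$ notation (equivalently, one may specialize to the case where the relevant Toeplitz coefficients are real).
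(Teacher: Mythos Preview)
Your proof is correct and follows essentially the same route as the paper's: expand the trace into a double sum, use the Toeplitz structure of the factors so that the summand depends only on index differences, and then interchange the order of summation to count the surviving indices. The only cosmetic difference is that the paper first forms $\bm{D}(\operatorname{\mathbf{E}}^k)^{\operatorname{T}}$ explicitly and then splits the resulting summation region into a rectangle and a triangle, whereas you treat right-multiplication by $(\operatorname{\mathbf{E}}^k)^{\operatorname{T}}$ as a subdiagonal selector on $\bm{C}_T\bm{D}$ and read off the $\min$ directly from the joint constraint on $j$.
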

\begin{proof}
See Appendix \ref{proof_lemma_tr}.
\end{proof}

\begin{lemma}
\label{lem:lemma_tr2}
Let $\bm{Q}$ be any matrix. Moreover, let $\bm{D}$ be a lower triangular Toeplitz matrix, $k \in \{0,\ldots,P-1\}$ and $\zeta(m) = \min\{P-k-1,P-m-1\}$. If $\bm{d} = [d_0,\ldots,d_{P-1}]^{\operatorname{T}}$ denote the first column of $\bm{D}$, then,
\begin{equation}
    \label{eq:trace_term_lem2}
    \tr(\bm{Q}\bm{D}(\operatorname{\mathbf{E}}^k)^{\operatorname{T}}) = \sum_{m=0}^{P-1} d_m \left( \sum_{j=0}^{\zeta(m)} \bm{Q}_{k+j,m+j} \right).
\end{equation}
\end{lemma}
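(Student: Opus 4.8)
The plan is to expand the trace directly into a sum over matrix entries and then reorganize the summation to group terms according to the diagonals of the lower triangular Toeplitz matrix $\bm{D}$, which carry the shared scalar weights $d_m$. This parallels the proof of Lemma \ref{lem:lemma_tr}, except that here the left factor $\bm{Q}$ is arbitrary, so its entries cannot be collapsed onto a single index and an inner sum survives.

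First I would record the entries of the shifted identity. Since $\operatorname{E}_{ij} = \delta(i-1-j)$, we have $(\operatorname{\mathbf{E}}^k)_{ij} = \delta(i-k-j)$, and hence $[(\operatorname{\mathbf{E}}^k)^{\operatorname{T}}]_{ci} = \delta(i-k-c)$, which is nonzero precisely when $c = i-k$. Multiplying $\bm{D}$ on the right therefore shifts its columns, giving $[\bm{D}(\operatorname{\mathbf{E}}^k)^{\operatorname{T}}]_{bi} = \bm{D}_{b,i-k}$, understood to vanish unless $i \geq k$. Using the Toeplitz structure, $\bm{D}_{b,i-k} = d_{b-i+k}$, which is additionally nonzero only when the lower triangular condition $b \geq i-k$ holds, i.e. $b-i+k \geq 0$. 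I would then write $\tr(\bm{Q}\bm{D}(\operatorname{\mathbf{E}}^k)^{\operatorname{T}}) = \sum_i \sum_b \bm{Q}_{ib}\,\bm{D}_{b,i-k} = \sum_i \sum_b \bm{Q}_{ib}\,d_{b-i+k}$, restricted to the index pairs for which the factor is nonzero, and substitute $m = b-i+k$ (so that $d_{b-i+k}=d_m$) together with $i = k+j$. This turns the entry $\bm{Q}_{ib}$ into $\bm{Q}_{k+j,\,m+j}$ and, crucially, makes the weight $d_m$ independent of the inner summation variable $j$.

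The only real obstacle is fixing the precise range of $j$ for each $m$. I would collect the constraints $0 \le i \le P-1$, $0 \le b \le P-1$, $i \ge k$, and $m \ge 0$, translate them into bounds on $j = i-k \ge 0$, and observe that the two binding upper constraints are $j \le P-1-k$ (from $i \le P-1$) and $j \le P-1-m$ (from $b = m+j \le P-1$). Taking the minimum of these yields exactly $\zeta(m) = \min\{P-k-1,\,P-m-1\}$; checking the cases $m \le k$ and $m > k$ separately confirms that the first bound is active when $m \le k$ and the second when $m > k$, matching $\zeta$ in both regimes. Substituting back gives $\tr(\bm{Q}\bm{D}(\operatorname{\mathbf{E}}^k)^{\operatorname{T}}) = \sum_{m=0}^{P-1} d_m \sum_{j=0}^{\zeta(m)} \bm{Q}_{k+j,\,m+j}$, which is the claimed identity.
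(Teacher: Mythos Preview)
Your proposal is correct and follows essentially the same approach as the paper's proof: both expand the trace into a double sum over the entries of $\bm{D}(\operatorname{\mathbf{E}}^k)^{\operatorname{T}}$, perform the same re-indexing to factor out $d_m$ (the paper's inner index $b$ is your $j$), and then determine the inner summation range by splitting into the cases $m \le k$ and $m > k$. Your derivation of the shift-matrix entries and of the bounds on $j$ is slightly more explicit than the paper's, but the argument is otherwise identical.
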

\begin{proof}
See Appendix \ref{proof_lemma_tr2}.
\end{proof}
A direct conclusion from Lemma \ref{lem:lemma_tr} is that the trace in \eqref{eq:trace_term_lem} and, thus, the term $\tr(\bm{\Gamma}^{-1}(\bm{B}(\operatorname{\mathbf{E}}^{i})^{\operatorname{T}} - \bm{Z}(\operatorname{\mathbf{E}}^{p-i})^{\operatorname{T}}))$ in the real-valued derivative \eqref{eq:grad_i} is computable in $\mathcal{O}(P)$ \acp{FLOP}. To apply Lemma \ref{lem:lemma_tr2}, it is necessary to compute
\begin{equation}
    \sum_{j=0}^{\zeta(m)} \bm{S}_{k+j,m+j}\ \mathrm{for}\ \mathrm{all}\ k,m=0,\ldots,P-1
\end{equation}
with $\bm{S}$ being the \ac{SCM}, which can be precomputed in $\mathcal{O}(P^2)$ \acp{FLOP}. Therefore, computing the remaining part of the real-valued gradient $\tr(\bm{S}(\bm{B}(\operatorname{\mathbf{E}}^{i})^{\operatorname{T}} - \bm{Z}(\operatorname{\mathbf{E}}^{p-i})^{\operatorname{T}}))$ in \eqref{eq:grad_i} can be done in $\mathcal{O}(P)$ \acp{FLOP} according to Lemma \ref{lem:lemma_tr2}. The same argumentation can be applied to the complex-valued case in \eqref{eq:grad_i_complex}. 
Consequently, the computational complexity of computing the derivatives for all nonzero $\alpha_i$ ($i > 0$) is $\mathcal{O}(P^2)$. The derivative with respect to $\alpha_0$ in equation \eqref{eq:grad_0} needs to be calculated once and requires $\mathcal{O}(P^2)$ \acp{FLOP}. Once the gradient is computed, it is projected onto the convex cone of feasible directions to determine the search direction. This projection operation, which guarantees an improving direction \cite{Boyd2004}, takes $\mathcal{O}(P)$ \acp{FLOP}. This results in overall $\mathcal{O}(P^2)$ \acp{FLOP} of computational cost for the search direction.

\subsection{Determining the Step Size}

To determine the step size in the projected gradient descent, we employ the Armijo rule. Since we have calculated the search direction and the gradient beforehand, the only step that adds to the computational complexity is evaluating the objective function at potential candidates for the next iteration. We rewrite the objective function as follows:
\begin{equation}
    \mathcal{L}_{\mathcal{D}}(\bm{\alpha}) = - \log \det \bm{\Gamma}(\bm{\alpha})^{-1} - \tr(\bm{\Gamma}\bm{S})
\end{equation}
where $\bm{\Gamma}(\bm{\alpha})^{-1}$ has also already been computed. To calculate the determinant, we exploit the property that the Cholesky decomposition of a Toeplitz structured \ac{CM} can be computed in $\mathcal{O}(P^2)$ \acp{FLOP} \cite{Bojanczyk1995}. Additionally, determining the \ac{ICM} $\bm{\Gamma}(\bm{\alpha})$ defined in \eqref{eq:GS_formula} also takes $\mathcal{O}(P^2)$ \acp{FLOP}. This is because the $k$th off-diagonal of the product of a $P \times P$ triangular Toeplitz matrix with its Hermitian can be recursively computed, requiring $\mathcal{O}(P-k)$ \acp{FLOP}. As a result, the overall computational complexity of this product is given by the Gauss summation $\mathcal{O}(\sum_{k=1}^P k) = \mathcal{O}(P^2)$. Consequently, $\mathcal{O}(P^2)$ \acp{FLOP} are necessary to evaluate the objective function.

Considering $L_{\mathrm{B}}$ and $T_{\mathrm{B}}$ as the number of iterations for determining the step size and for the projected gradient descent, respectively, the overall computational complexity, without accounting for hyperparameter tuning, is $\mathcal{O}(P^2L_{\mathrm{B}}T_{\mathrm{B}})$ \acp{FLOP} and, thus, scales quadratically with the samples' dimension.

\section{Simulation Results}
\label{sec:simulation_results}
In the following, we consider different configurations of general \ac{WSS} \ac{AR}, \ac{MA}, and \ac{FBM} processes as ground truth and evaluate the estimators' performance for different numbers $N$ of samples and different dimensions $P$. As a performance measure, we evaluate the \ac{NMSE} $\E[(\|\hat{\bm{A}} - \bm{A}\|_{\mathrm{F}}^2)/\|\bm{A}\|_{\mathrm{F}}^2]$ ($\bm{A} \in \{\bm{C},\bm{\Gamma}\}$) of the estimated \ac{CM} $\hat{\bm{C}}$ ($\mathrm{NMSE}_{\mathrm{C}}$) and estimated \ac{ICM} $\hat{\bm{\Gamma}}$ ($\mathrm{NMSE}_{\mathrm{\Gamma}}$) compared to ground truth by averaging over $N_{\mathrm{runs}} = 500$ Monte Carlo runs. In cases where the considered \ac{CM} estimator does not preserve positive definiteness, we solely evaluate $\mathrm{NMSE}_{\mathrm{C}}$. All estimators we compare ourselves with output an estimate $\hat{\bm{C}}$ of the \ac{CM} rather than an estimate of the inverse. In these cases, the estimated \ac{ICM} $\hat{\bm{\Gamma}}$ is computed by inverting $\hat{\bm{C}}$, if possible. On the other side, our proposed methods output a parameterization of the \ac{ICM} $\bm{\Gamma}$ and, thus, in this situation we derive $\hat{\bm{C}}$ subsequently by inverting $\hat{\bm{\Gamma}}$, i.e., $\hat{\bm{C}} = \hat{\bm{\Gamma}}^{-1}$, which takes $\mathcal{O}(P^2)$ \acp{FLOP}. 
The banding and tapering as well as our proposed estimators require hyperparameter tuning. We utilize $4$-fold cross validation for tuning the hyperparameter of the banding and tapering estimators in all cases with $N \geq 4$. For $N=1$, we apply the hyperparameter tuning from \cite{Wu2009} and \cite{Murry2010}. As windowing function $g(\cdot)$ (cf. Section \ref{sec:banding_tapering}), we choose the trapezoid function from \cite[Sec. 2]{Cai2013} with hyperparameter $k_{\operatorname{T}}$.
For the eigenvalue, Frobenius and box constraint estimators proposed in Section \ref{sec:psd_constraints}, we utilize the \ac{BIC} to tune the order of the fitted \ac{AR} process, cf. Section \ref{sec:order_tuning}. The corresponding likelihood \acp{OP} in \eqref{eq:eig_opt} and \eqref{eq:frob_opt} are solved by an interior point method.
Independently of the process at hand, we use the same class of exponentially decreasing functions $\mathcal{F}$ for the box constraint tuning, cf. Section \ref{sec:tuning_bounds}, i.e,
\begin{equation}
    \label{eq:class_of_functions}
    \mathcal{F} = \{f_m(\eta_m,i) \}_{m=1}^5,
\end{equation}
with $f_m(\eta_m,i) = \eta_m e^{-\lambda_m i}$ and $i \in \{1,\ldots,P-1\}$. The particular values of $\lambda_m$ and $\eta_m$ are given in Table \ref{tab:box_con_functions}. The parameters $\eta_m$ are determined by bisection, cf. Section \ref{sec:box_constraints}. Generally, these parameters are dimension-dependent. However, the exponentially decaying behaviour of the functions contained in $\mathcal{F}$ suppresses the \ac{GS} parameters $\alpha_i$ for large $i$, independently of the chosen dimension. Consequently, if the dimension is sufficiently large, the parameters $\eta_m$ can be approximated to be dimension-independent. With the chosen $\lambda_m$ values, this simplification is valid for dimensions larger 16, and, therefore, we apply it in our simulations. We use the likelihood as performance measure to determine the best-fitting function in $\mathcal{F}$. We did not observe noteworthy differences between using the likelihood and $k$-fold cross validation. Therefore, and because of the disadvantage of leading to a larger computational overhead due to the necessity of computing the estimator $k$ times, we leave out $k$-fold cross validation in our simulations. For hyperparameter tuning, we continually increase the estimator's degrees of freedom, terminate the process after $5$ configurations without increasing hyperparameter performance measure and take the best performing configuration to reduce its computational overhead. The simulation code is publicly available\footnote{\url{https://github.com/beneboeck/toep-cov-estimation}}.
\begin{table}[t]
\centering
 \caption{Chosen parameters of the functions $f_m(\eta_m,i)$ in $\mathcal{F}$.}
 \label{tab:box_con_functions}
\begin{tabular}{|>{\bfseries}p{0.08\columnwidth}|p{0.08\columnwidth}|p{0.08\columnwidth}|p{0.08\columnwidth}|p{0.08\columnwidth}|p{0.08\columnwidth}|p{0.08\columnwidth}}
\hline
$m$ & $1$ & $2$ & $3$ & $4$ & $5$ \\
\hline
$\lambda_m$ & $0.6$ & $1$ & $1.4$ & $1.8$ & $2.2$ \\
\hline
$\eta_m$ & $0.822$ & $1.718$ & $3.055$ & $5.047$ & $8.025$ \\
\hline
\end{tabular}
  \vspace{-0.45cm}
\end{table}

\subsection{Autoregressive and Moving Average Processes}
In Fig. \ref{fig:ARMA_one_processes}, the ´estimated \ac{NMSE} of the estimators is shown for the \ac{WSS} \ac{AR}(1) process $X_t = aX_{t-1} + e_t$ (a) and b)) and the \ac{MA}(1) process $X_t = e_t + b e_{t-1}$ (c) and d)). In all cases, $e_{\tau}$ is \ac{AWGN} with variance $\sigma^2 = 0.64$ and from each process we draw $N=8$ \ac{i.i.d.} samples, where each sample consists of $P = 16$ successive and therefore correlated realizations of the process. In case of an \ac{AR}(1) process, the \ac{ICM} is tridiagonal and the coefficients of the \ac{CM} decrease exponentially along its off-diagonals, while in case of the \ac{MA}(1) process it is vice versa.

In Fig. \ref{fig:ARMA_one_processes} a) and b), the proposed  \ac{Eig}, \ac{Frob}, \ac{PGD} and \ac{PLS} estimators regularize their estimation without necessarily introducing bias, cf. Section \ref{sec:duality}, and lead to the best performance. Note that for $N=8$ and $P=16$, the ordinary conditional likelihood estimator without projection frequently leads to indefinite \acp{ICM} and, thus, cannot be used for \ac{CM} estimation. \ac{PGD} and \ac{PLS} perform slightly better than \ac{Eig} and \ac{Frob}, which is due to the box constraint estimators' stronger regularization. In Fig. \ref{fig:ARMA_one_processes} a), the \ac{band} and \ac{tape} estimators introduce useful bias due to the \ac{CM}'s exponentially decreasing coefficients, but lead to worse performance than our proposed methods. Due to \ac{band}'s and \ac{tape}'s missing guarantee of positive definiteness, they cannot be used for estimating the \ac{ICM} and are not represented in Fig. \ref{fig:ARMA_one_processes} b). Although incorporating the true distribution, the likelihood estimators, i.e., $\hat{\bm{C}}_{\mathrm{Circ}}$ \ac{circ} and $\hat{\bm{C}}_{\mathrm{EM}}$ \ac{em} cannot adapt to the \ac{CM}'s specific characteristics through hyperparameter tuning, which leads to a worse performance in Fig. \ref{fig:ARMA_one_processes} a) and b) compared to the estimators involving hyperparameter tuning. The estimator $\bm{S}_{\mathrm{avg}}$ \ac{avg} in \eqref{eq:sCov_avg} neither involves hyperparameter tuning nor incorporates the distribution and leads in almost all cases in Fig. \ref{fig:ARMA_one_processes} a) to an even worse performance. The \ac{shu} is a convex combination of the \ac{SCM} $\bm{S}$ and \ac{avg} and performs always worse than \ac{avg} in Fig. \ref{fig:ARMA_one_processes} a) due to the imperfect estimation of its convex coefficient. The \ac{shb} performs well for $a=0.1$. This is due to its target matrix $\bm{T}_{\operatorname{H}}$ in \eqref{eq:shrinkage_biased_target}, which weights the main-diagonal and all off-diagonals differently and, thus, is able to suppress all off-diagonals in the case of a diagonally dominant \ac{CM}. However, in all other cases, the equal weighting of all off-diagonal entries introduces a suboptimal bias for the \ac{CM}'s non-negligible off-diagonals and it performs significantly worse. The performance of \ac{shb}'s estimated \ac{ICM} can be explained in the same manner. 
\begin{figure}[t]
    \centering
  \includegraphics{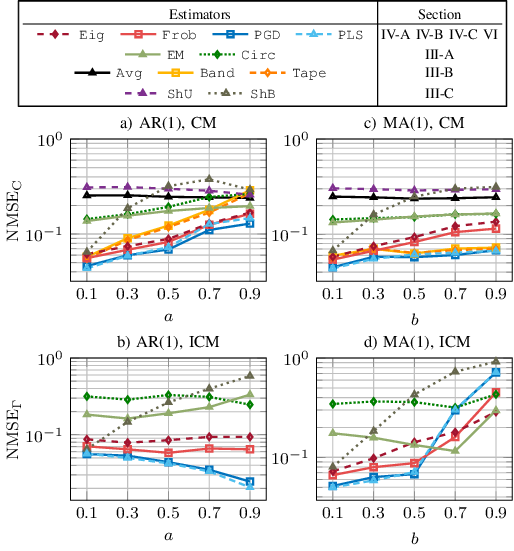}
   \vspace{-6pt}
	\caption{NMSE of the CM and ICM for an AR(1) process with parameter $a$, cf. a) and b), and a MA(1) process with parameter $b$, cf. c) and d). The noise variance is set to $0.64$ in all cases, the dimension $P$ is set to $16$ and the number $N$ of samples equals $8$.}
	\label{fig:ARMA_one_processes}
	\vspace{-0.6cm}
\end{figure}

In case of a \ac{MA}(1) process in Fig. \ref{fig:ARMA_one_processes} c) and d), \ac{band} and \ac{tape} regularize their estimation without necessarily introducing bias, whereas \ac{Eig}, \ac{Frob}, \ac{PGD} and \ac{PLS} introduce bias, cf. Section \ref{sec:duality}. However, the latter are likelihood estimators and, thus, have the advantage of incorporating the true distribution in contrast to the former. By exploiting this additional prior knowledge, our estimators diminish their model mismatch of fitting an \ac{AR} process to the underlying \ac{MA} process, especially for small $b$ in Fig. \ref{fig:ARMA_one_processes} c). 
For larger $b$, however, the coefficients in the \ac{ICM} decrease slower in absolute value, leading to a stronger bias of \ac{Eig}, \ac{Frob}, \ac{PGD} and \ac{PLS}. This results in a worse performance of \ac{Eig} and \ac{Frob}, whereas the stronger regularization of \ac{PGD} and \ac{PLS} keep their performance in \ac{band}'s and \ac{tape}'s performance range. In terms of estimating the \ac{ICM} in Fig. \ref{fig:ARMA_one_processes} d) for small $b$, the estimators' performance can be explained similarly to their performance in Fig. \ref{fig:ARMA_one_processes} b). For larger $b$, however, \ac{PGD} and \ac{PLS} perform worse than the other likelihood estimators, which can be explained by considering the class of functions $\mathcal{F}$ in \eqref{eq:class_of_functions} used for the box constraint tuning. Generally, $\lambda_m$ in $f_m(\eta_m,i)$ determines how quickly the bound for $|\alpha_i|$ decays in $i$. The smaller we choose $\lambda_m$ in $f_m(\eta_m,i)$, the larger the allowed constraint set is for $\alpha_i$ with large $i$ relative to those with small $i$. However, the smaller $\lambda_m$ gets, the smaller $\eta_m$ has to be to satisfy the constraint in Lemma \ref{lem:lemma_f} and, thus, the smaller the constraint set generally gets. For a \ac{MA}(1) process with large $b$, the true $\bm{\alpha}$ exhibits a slowly decaying behaviour in its entries and only a small gap exists between $\alpha_0$ and the remaining $\alpha_i$ ($i=1,\ldots$). In consequence, none of the functions $f_m(\eta_m,i)$ in $\mathcal{F}$ fits rendering $\mathcal{F}$ ill-suited for this specific situation. Additionally, for larger $b$ the stronger bias of \ac{Frob}, \ac{Eig}, \ac{PGD} and \ac{PLS} worsens their performance in general. From now on, we exclude \ac{Eig} due to the issues concerning its complexity, cf. Section \ref{sec:eigenvalue}.

\begin{figure}[t]
    \centering
 	\includegraphics{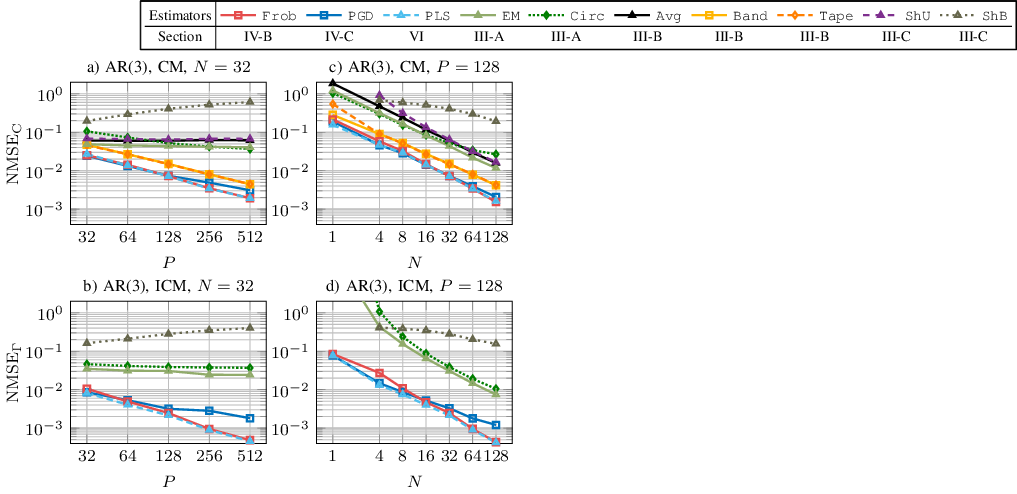}

   \vspace{-6pt}
	\caption{NMSE for an \ac{AR}(3) process ($\sigma^2 = 0.64$, $a_1 = 0.5$, $a_2 = 0.2$, $a_3 = 0.05$) with fixed number of samples $N=32$, cf. a) and b), and with fixed dimension $P=128$, cf. c) and d).}
	\label{fig:AR3}
	\vspace{-0.6cm}
\end{figure}

In Fig. \ref{fig:AR3}, the estimated \ac{NMSE} of the estimators is given in case of the \ac{AR}(3) process $X_t = 0.5 X_{t-1} + 0.2 X_{t-2} + 0.05 X_{t-3} + e_t$ for $N=32$ and varying dimension $P$, cf. a) and b), as well as $P=128$ and varying number $N$ of samples, cf. c) and d). In Fig. \ref{fig:MA3}, the same plots are given in case of the \ac{MA}(3) process $X_t = e_{t} + 0.5 e_{t-1} + 0.2 e_{t-2} + 0.05 e_{t-3}$. In both cases, $e_{\tau}$ is \ac{AWGN} with variance $\sigma^2 = 0.64$. Generally, the performance of the estimators is consistent with the results for the \ac{AR}(1) process and the \ac{MA}(1) process in Fig. \ref{fig:ARMA_one_processes}. For the \ac{AR}(3) process, our proposed estimators \ac{Frob}, \ac{PGD}, and \ac{PLS} generally perform the best due to their regularization without bias and their consideration of the underlying distribution. 

\begin{figure}[t]
    \centering
    \vspace{0.9cm}
 	\includegraphics{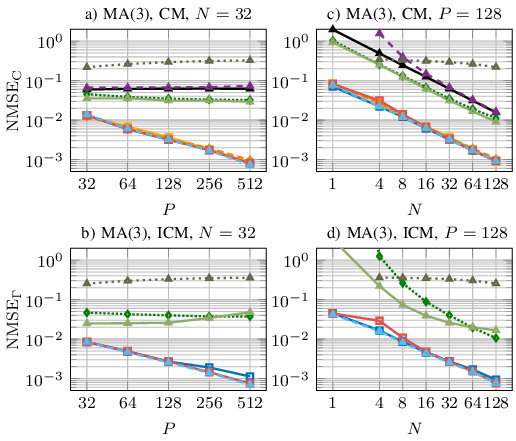}
   \vspace{-18pt}
	\caption{NMSE for a \ac{MA}(3) process ($\sigma^2 = 0.64$, $b_1 = 0.5$, $b_2 = 0.2$, $b_3 = 0.05$) with fixed number of samples $N=32$, cf. a) and b), and with fixed dimension $P=128$, cf. c) and d).}
	\label{fig:MA3}
	\vspace{-0.4cm}
\end{figure} 

For the \ac{CM} of the \ac{MA}(3) process in Fig. \ref{fig:MA3} a) and c), all estimators incorporating hyperparameter tuning perform approximately the same. Although introducing bias, \ac{Frob}, \ac{PGD}, and \ac{PLS} consider the underlying distribution and, thus, are able to equalize \ac{band}'s and \ac{tape}'s better-fitting regularization. For small $N \leq 8$, \ac{PGD} performs slightly better than \ac{Frob}, whereas for large $N \geq 32$ it is vice versa. This behaviour is due to the box constraint estimators' smaller constraint set having a strong regularization effect. Although being beneficial for cases with only a few samples $N$, a strong regularization effect is disadvantageous for larger $N$, where the estimation's variance is small. Generally, \ac{PLS}'s performance approximately matches the better performance of \ac{Frob} and \ac{PGD}. For small $N$, this is due to having the same strong regularization effect as \ac{PGD}, whereas for large $N$ or $P$, finding first the global optimum of the unconstrained conditional likelihood and then projecting onto the box constraints is less restrictive than finding a local optimum within the box constraints.
\subsection{Fractional Brownian Motion}
Another \ac{WSS} process used for modeling long-term dependencies in, e.g., internet traffic, is \ac{FBM}($h$) \cite{Leland1994}. The entries in its corresponding \ac{CM} are defined as
\begin{equation}
    \bm{C}_{ij} = \frac{1}{2} \left( (|i-j| + 1)^{2h} - 2|i-j|^{2h} + (|i-j|-1)^{2h} \right)
\end{equation}
where the so-called Hurst parameter $h$ satisfies $h \in [0.5,1]$. 
In Fig. \ref{fig:FBM}, the estimated \ac{NMSE} of the estimators is evaluated in case of the \ac{FBM}(0.7) process for $N=32$ and a varying dimension $P$, cf. a) and b), as well as $P=128$ and a varying number $N$ of samples, cf. c) and d). The key property of \ac{FBM} is that the entries of the corresponding \ac{CM} exhibit a certain saturation in their decline rate, leading to a full \ac{CM} and \ac{ICM} with non-negligible entries. In consequence, banding the \ac{CM} or \ac{ICM} leads to a larger bias compared to the case of \ac{AR} or \ac{MA} processes, and the estimators incorporating hyperparameter tuning do not outperform the ones without in terms of estimating the \ac{CM} in Fig. \ref{fig:FBM} a). In Fig. \ref{fig:FBM} b), \ac{Frob}, \ac{PGD} and \ac{PLS} notably outperform the compared methods since the true $\bm{\alpha}$ in \ac{FBM} exhibits a strong decrease along its entries in contrast to the \ac{CM}'s off-diagonals. This leads to a better performing bias when banding the \ac{ICM} compared to banding the \ac{CM}. In Fig. \ref{fig:FBM} c), it can be seen that for small $N$, the regularization effect of banding the \ac{CM} and \ac{ICM} is beneficial due to the high estimation's variance. However, for large $N$, this regularization is disadvantageous due to the smaller estimation's variance.
\begin{figure}[t]
\vspace{0.7cm}
    \centering
   	\includegraphics{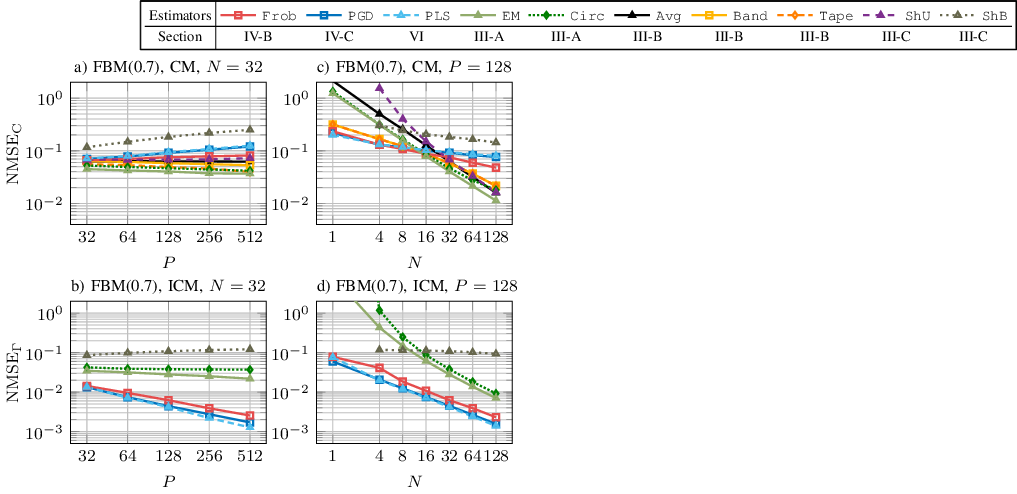}
   \vspace{-17pt}
\caption{NMSE for an \ac{FBM} process ($h = 0.7$) with fixed number of samples $N=32$, cf. a) and b), and with fixed dimension $P=128$, cf. c) and d).}
\label{fig:FBM}
	\vspace{-0.4cm}
\end{figure}
\subsection{Complexity and Runtime Analysis}
In Fig. \ref{fig:complexity}, the complexity of all estimators (right) and the average measured time in seconds of one estimate for variable dimensions $P$ and fixed number $N=64$ of samples (left) are given. In both cases, hyperparameter tuning and the computation of the \ac{SCM} is not considered. The underlying process is an \ac{ARMA}(1,1) process defined as $X_t = 0.7 X_{t-1} + e_t + 0.3e_{t-1}$ with $\sigma^2 = 0.64$. For \ac{band} and \ac{tape} as well as \ac{PGD}, \ac{PLS} and \ac{Frob}, we assume the hyperparameter $k_{\mathrm{B}}$, $k_{\operatorname{T}}$ and $i_{\mathrm{BIC}}$ to be $6$, cf. \eqref{eq:banding_mask} and \eqref{eq:BIC_choice}. Additionally, in case of the \ac{CM} estimators, the measured time as well as the complexity refer to yielding a \ac{CM} estimate and for our proposed estimators they refer to returning an \ac{ICM} estimate. The simulations have been conducted on an \textit{Intel Core(TM) i7-1260P (12th Gen)} processor with a base clock speed of 2.1 GHz and all estimators are implemented in \textit{Matlab} \textit{R2022b}.
\begin{figure}
\vspace{1.7cm}
\centering
\begin{subfigure}{.5\columnwidth}
 \includegraphics{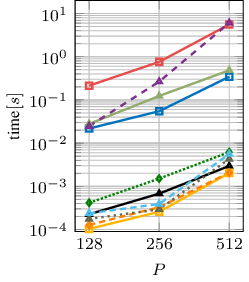}
  \label{fig:sub1}
\end{subfigure}%
\begin{subfigure}{.45\columnwidth}
	  \resizebox{\columnwidth}{!}{
      \begin{tabular}{|c|c|}
        \hline
        \footnotesize Name & \makecell{ \footnotesize Complexity w.o.\\ [-3pt] \footnotesize hyperparameter\\ [-3pt] \footnotesize tuning} \\
        \hline
        \centering \footnotesize \ac{Frob} & \footnotesize $\mathcal{O}(\xi P^2)$ \\[-2.1pt]
        \centering \footnotesize \ac{PGD} &  \footnotesize $\mathcal{O}(L_{\mathrm{B}}T_{\mathrm{B}}P^2)$ \\[-2.1pt]
        \centering \footnotesize \ac{PLS} &  \footnotesize $\mathcal{O}(Pi_{\mathrm{BIC}} + i_{\mathrm{BIC}}^3)$ \\[-2.1pt]
        \centering \footnotesize \ac{band} & \footnotesize $\mathcal{O}(P k_{\mathrm{B}})$  \\[-2.1pt]
        \centering \footnotesize \ac{tape} & \footnotesize $\mathcal{O}(P k_{\mathrm{T}})$  \\[-2.1pt]
        \centering \footnotesize \ac{circ} &  \footnotesize $\mathcal{O}(P^2 \log (P^2))$ \\[-2.1pt]
        \centering \footnotesize \ac{em} & \footnotesize $\mathcal{O}(T_{\mathrm{E}}P^3)$ \\[-2.1pt]
        \centering \footnotesize \ac{avg} & \footnotesize $\mathcal{O}(P^2)$\\[-2.1pt]
        \centering \footnotesize \ac{shu} & \footnotesize $\mathcal{O}(P^3)$ \\[-2.1pt]
        \centering \footnotesize \ac{shb} & \footnotesize $\mathcal{O}(P^2)$ \\
        \hline
    \end{tabular}}
  \label{fig:sub2}
\end{subfigure}
\vspace{-4pt}
\caption{Left: Measured time without hyperparameter tuning for an \ac{ARMA}(1,1) process ($\sigma^2 = 0.64$, $a = 0.7$, $b = 0.3$) with a fixed number of samples $N=64$. Right: Computational complexity without hyperparameter tuning.}
\label{fig:complexity}
\vspace{-0.7cm}
\end{figure}
The computational cost for \ac{band} and \ac{tape} is linear in $P$ and they are the fastest overall. \ac{PLS} is also linear in $P$ and can be implemented in $\mathcal{O}(Pi_{\mathrm{BIC}} + i_{\mathrm{BIC}}^3)$ \acp{FLOP} by considering all of its structure. Thus, its complexity is comparable to \ac{band}'s and \ac{tape}'s complexity. Estimating the convex coefficient for \ac{shb} can be done in $\mathcal{O}(P^2)$ \acp{FLOP}, and its measured time is in \ac{band}'s, \ac{tape}'s and \ac{PLS}'s range, followed by \ac{avg}, also requiring $\mathcal{O}(P^2)$ \acp{FLOP}. Although being a likelihood estimator, \ac{circ} exhibits a closed-form solution and, thus, does not require an iterative procedure. Moreover, it can be implemented by means of the 2D \ac{FFT} in $\mathcal{O}(P^2 \log (P^2))$ \acp{FLOP}. Therefore, it is just slightly slower than \ac{avg}. \ac{PGD} is the fastest of all iterative estimators since its complexity scales quadratically in the samples's dimension $P$, cf. Section \ref{sec:computational_complexity}. \ac{em} exhibits a cubic complexity in the computation of its $T_{\mathrm{E}}$ iterations. In our simulations, however, \ac{em} generally required less iterations than \ac{PGD}, reducing the gap between these two estimators although having different complexity orders of their individual iterations. In our simulations, we used the interior point method of \textit{Matlab}'s built-in optimization function \textit{fmincon} for computing \ac{Frob}. Moreover, the closed-form solution's complexity of the constraint's gradient can be shown to be cubic in $P$ and, thus, has to be approximated to preserve quadratic complexity. We recognized severe speed ups in terms of measured time, but no decrease in performance due to this approximation. Thus,  the update steps of the interior point method can be done in $\mathcal{O}(P^2)$ \acp{FLOP}. However, the scaling factor $\xi$ of the interior point method depends on its specific implementation.

\section{Conclusion}
In this work, we presented a novel class of methods yielding likelihood \ac{CM} and \ac{ICM} estimators for Toeplitz structured \acp{CM} by utilizing the \ac{GS} decomposition. We derived constraint sets for preserving positive definiteness, leading to well-defined likelihood \acp{OP}, and introduced various methods for tuning the hyperparameters and incorporating prior knowledge for this class of estimators. The proposed class of estimators shares the advantage of likelihood estimators, i.e., preserving positive definiteness and considering the underlying distribution, but at the same time includes the advantage of banding and tapering estimators, i.e., an interpretable regularizing method for hyperparameter tuning. Extensive simulations show that the proposed class of estimators significantly outperforms other methods in terms of estimating the \ac{ICM}. In terms of estimating the \ac{CM}, they either show comparable or even better performance to the best benchmark estimators, depending on the underlying structure. 
Moreover, a complexity analysis renders our proposed projected LS estimators to be among the computationally cheapest estimators. 
\appendix
\subsection{Proof of Lemma \ref{lem:frob}}
\label{proof_lemma_frob}
We begin with inserting the \ac{GS} decomposition \eqref{eq:GS_formula} into the definition of positive definiteness, i.e.,
\begin{equation}
    \label{eq:lemma_frob_1}
    \frac{1}{\alpha_0}\left(\bm{x}^{\operatorname{H}}\bm{B}\bm{B}^{\operatorname{H}}\bm{x} - \bm{x}^{\operatorname{H}}\bm{Z}\bm{Z}^{\operatorname{H}}\bm{x}\right) > 0\ \mathrm{for}\ \mathrm{all}\ \bm{x} \in \mathbb{C}^P, \bm{\alpha} \in \mathcal{A}
\end{equation}
with $\bm{x} \neq \bm{0}$. Note that $\bm{B}$ has exactly one eigenvalue $\alpha_0 > 0$ with algebraic multiplicity $P$. Thus, it has full rank and its range is $\mathbb{C}^P$.  By defining $\bm{y} = \bm{B}^{\operatorname{H}}\bm{x}$, we reformulate \eqref{eq:lemma_frob_1} as
\begin{equation}
    \label{eq:lemma_frob_2}
    \| \bm{y} \|_2^2 - \|\bm{Z}^{\operatorname{H}}\bm{B}^{-\operatorname{H}}\bm{y}\|_2^2 > 0\ \mathrm{for}\ \mathrm{all}\ \bm{y} \in \mathbb{C}^P, \bm{\alpha} \in \mathcal{A}
\end{equation}
with $\bm{y} \neq \bm{0}$. By considering the decomposition $\bm{y} = \|\bm{y}\|_2\hat{\bm{y}}$, multiplying out $\|\bm{y}\|_2^2$, and rearranging the terms, we get
\begin{equation}
    \label{eq:lemma_frob_3}
    \|\bm{Z}^{\operatorname{H}}\bm{B}^{-\operatorname{H}}\hat{\bm{y}}\|_2^2 < 1\ \mathrm{for}\ \mathrm{all}\ \hat{\bm{y}} \in \mathbb{C}^P\ \mathrm{with}\ \|\hat{\bm{y}}\|_2^2 = 1, \bm{\alpha} \in \mathcal{A}
\end{equation}
which is equivalent to
\begin{equation}
    \label{eq:lemma_frob_4} 
    \max_{\bm{\alpha} \in \mathcal{A}, \hat{\bm{y}} \in \mathbb{C}^P:\|\hat{\bm{y}}\|_2^2 = 1} \|\bm{Z}^{\operatorname{H}}\bm{B}^{-\operatorname{H}}\hat{\bm{y}}\|_2^2 < 1.
\end{equation}
The maximization over all $\hat{\bm{y}}$ with norm 1 coincides with the definition of the spectral norm, leading to
\begin{equation}
    \label{eq:lemma_frob_5}
    \max_{\bm{\alpha} \in \mathcal{A}} \|\bm{Z}^{\operatorname{H}}\bm{B}^{-\operatorname{H}}\|_2^2 < 1.
\end{equation}
\subsection{Proof of Lemma \ref{lem:g}}
\label{proof_lemma_g}
We start by deriving a closed-form solution for $\bm{B}^{-\operatorname{H}}$. The work in \cite{Sahin2018} shows that the inverse of a $P \times P$ lower triangular Toeplitz matrix $\bm{R}$ of the following form
\begin{equation}
    \bm{R}_{ij} = \begin{cases} 0 & \mathrm{if}\ j > i \\
                                1 & \mathrm{if}\ i = j \\
                                -r_{|i-j|} & \mathrm{else}
                    \end{cases}
\end{equation}
with parameters $\bm{r} = [r_1,\ldots,r_{P-1}]$, is given by

\begin{equation}
\label{eq:fibonacci_inverse}
    (\bm{R}^{-1})_{ij} = \begin{cases}0 & \mathrm{if} j > i \\
                                    1 & \mathrm{if}\ i = j \\
                                    F_{i-j}(\bm{r})& \mathrm{else}
                        \end{cases}
\end{equation}
and $F_{i}(\bm{r})$ is the special case of the generalized Fibonacci sequence defined in \eqref{eq:fibonacci_definition}. By multiplying out $\alpha_0$ in $\bm{B}$, setting $r_i = - \frac{\overline{\alpha_i}}{\alpha_0}$ and swapping the indices due to $\bm{B}^{-\operatorname{H}}$ being upper triangular, we can apply the closed-form solution in \eqref{eq:fibonacci_inverse} to compute
\begin{equation}
    \label{eq:B_inv_closed_form}
    (\bm{B}^{-\operatorname{H}})_{ij} = \begin{cases}0 & \mathrm{if}\ i > j \\
                                    \frac{1}{\alpha_0} & \mathrm{if}\ i = j \\
                                    \frac{1}{\alpha_0} F_{j-i}(- \frac{\overline{\bm{\alpha}_{\geq 1}}}{\alpha_0})& \mathrm{else}
                        \end{cases}
\end{equation}
with $\bm{\alpha}_{\geq 1} = [\alpha_1,\ldots,\alpha_P]^{\operatorname{T}}$. The matrix $\bm{Z}^{\operatorname{H}}$ defined by \eqref{eq:Z_def} can be given elementwise by
\begin{equation}
    \label{eq:Z_elementwise}
    (\bm{Z}^{\operatorname{H}})_{ij} = \begin{cases}\alpha_{P - (j-i)} & \mathrm{if}\ j > i \\ 0 & \mathrm{else} \end{cases}.
\end{equation}
Additionally, the matrix $\bm{Z}^{\operatorname{H}}\bm{B}^{-\operatorname{H}}$ exploits the same structure as $\bm{Z}^{\operatorname{H}}$ since the multiplication of two upper triangular Toeplitz matrices preserves the structure. Therefore, we define
\begin{equation}
    \label{eq:ZB_elementwise}
    (\bm{Z}^{\operatorname{H}}\bm{B}^{-\operatorname{H}})_{ij} = \begin{cases}g_{j-i} & \mathrm{if}\ j > i \\ 0 & \mathrm{else} \end{cases},
\end{equation}where $\bm{g} = [g_1,\ldots,g_{P-1}]$ is fully defined by the first row of $\bm{Z}^{\operatorname{H}}\bm{B}^{-\operatorname{H}}$ and can be elementwise computed by
\begin{equation}
    \label{eq:g_k}
    g_k = \sum_{j=1}^{P-1} (\bm{Z}^{\operatorname{H}})_{0j}(\bm{B}^{-\operatorname{H}})_{jk} = \sum_{j=1}^{k}\frac{\alpha_{P-j}}{\alpha_0}F_{k-j}\left(- \frac{\overline{\bm{\alpha}_{\geq 1}}}{\alpha_0}\right)
\end{equation}
where we used \eqref{eq:Z_elementwise} and the triangular Toeplitz structure of $\bm{B}^{-\operatorname{H}}$. Based on \eqref{eq:ZB_elementwise} we conclude that $g_d$ appears $P-d$ times in $\bm{Z}^{\operatorname{H}}\bm{B}^{-\operatorname{H}}$, rendering the Frobenius norm to be
\begin{equation}
    \label{eq:frob_closed_form}
    \|\bm{Z}^{\operatorname{H}}\bm{B}^{-\operatorname{H}}\|_F^2 = \sum_{d=1}^{P-1} (P-d)|g_d|^2.
\end{equation}
\subsection{Proof of Theorem \ref{th:theorem_K}}
\label{proof_theorem_K}
We bound $g_k$ in \eqref{eq:g_k} using the triangular inequality, i.e.,
\begin{equation}
    |g_k| \leq \sum_{j=1}^{k} \left|\frac{\alpha_{P-j}}{\alpha_0}F_{k-j}\left(- \frac{\overline{\bm{\alpha}_{\geq 1}}}{\alpha_0}\right)\right|.
    \end{equation}
In the following, we utilize the observation that
\begin{align}
    \left|F_i\left(- \frac{\overline{\bm{\alpha}_{\geq 1}}}{\alpha_0}\right)\right| \leq \sum_{l=0}^{i-1}\left|\frac{\overline{\alpha}_{i-l}}{\alpha_0}\right|\left|F_l\left(- \frac{\overline{\bm{\alpha}_{\geq 1}}}{\alpha_0}\right)\right| \leq F_i\left(\left|\frac{\overline{\bm{\alpha}_{\geq 1}}}{\alpha_0}\right|\right),
\end{align}
where the first inequality comes from the triangular inequality and the definition in \eqref{eq:fibonacci_definition}, and the second inequality results from recursively applying the first inequality. Therefore,
\begin{align}
    \left|\frac{\alpha_{P-j}}{\alpha_0}F_{d-j}\left(- \frac{\overline{\bm{\alpha}_{\geq 1}}}{\alpha_0}\right)\right|
    \leq \left|\frac{\alpha_{P-j}}{\alpha_0}\right|F_{d-j}\left(\left|\frac{\overline{\bm{\alpha}_{\geq 1}}}{\alpha_0}\right|\right).
\end{align}
We asumme to have box constraints $|\alpha_i| \leq K_i \alpha_0$ for all $i$ with $\alpha_0 > 0$. Moreover, by observing that $F_{i}(\cdot)$ is componentwise monotonically increasing over the domain $\mathbb{R}^{P-1}_{+}$, we conclude
\begin{equation}
    \left|\frac{\alpha_{P-j}}{\alpha_0}F_{d-j}\left(- \frac{\overline{\bm{\alpha}_{\geq 1}}}{\alpha_0}\right)\right| \leq K_{P-j} F_{d-j}(\bm{K})
\end{equation}
and, according to Lemma \ref{lem:g},
\begin{align}
    \label{eq:frob_smaller_B}
    \|\bm{Z}^{\operatorname{H}}\bm{B}^{-\operatorname{H}}\|_F^2 \leq B(\bm{K})
\end{align}
with $B(\bm{K})$ being defined in \eqref{eq:B_of_K}. Since all $K_i$ in \eqref{eq:B_of_K} are non negative, $B(\bm{K})$ is componentwise monotonically increasing in $\bm{K}$. Additionally, $B(\bm{K})$ is continuous and equals zero for $\bm{K} = \bm{0}$.
Thus,
\begin{equation}
    \exists \bm{K} \in \mathbb{R}^{P-1}_{++}: B(\bm{K}) < 1.
\end{equation}
Based on \eqref{eq:frob_smaller_B} and Lemma \ref{lem:frob}, $B(\bm{K}) < 1$ implies the \ac{GS} decomposition $\bm{\Gamma}(\bm{\alpha})$ in \eqref{eq:GS_formula} to be \ac{PD}, concluding the proof.
\subsection{Proof of Lemma \ref{lem:lemma_f}}
\label{proof_lemma_f}
Since $B(\bm{K})$ is monotonically increasing in each entry of $\bm{K}$ and $f(\eta,i)$ is monotonically increasing in $\eta$ (Property b)), the concatenation $B_f(\eta)$ is monotonically increasing in $\eta$ as well. Additionally, $f(\eta,i)$ is continuous and equals zero for $\eta = 0$ (Property c)). Therefore, $B_f(\eta)$ is continuous, $B_f(0) = 0$, and
\begin{equation}
    \exists \eta \in \mathbb{R}_{++}: B_f(\eta) < 1,
\end{equation}
which implies $\bm{\Gamma}(\bm{\alpha})$ in \eqref{eq:GS_formula} to be \ac{PD} and concludes the proof.

\subsection{Proof of Lemma \ref{lem:lemma_con}}
\label{proof_lemma_con}

It can be seen that the maximization of $\mathcal{L}_{L|C}(\bm{a},\sigma^2)$ defined in \eqref{eq:conditional_log_likelihood} with respect to $\bm{a}$ is $\sigma^2$-independent and, thus, equivalent to maximizing
\begin{equation}
    \label{eq:a_loglikeli}
    - \frac{1}{N}\sum_{n=1}^N\sum_{t=w}^{P-1}|x_t^{(n)} -  \sum_{m=1}^w a_m x_{t-m}^{(n)}|^2.
\end{equation}
Utilizing Wirtinger calculus and setting the derivative of \eqref{eq:a_loglikeli} with respect to $\overline{a}_j$ ($j=1\ldots,w$) to zero yields
\begin{equation}
    \label{eq:mid_identity_conlikeli}
    \frac{1}{N} \sum_{n=1}^N \sum_{t=w}^{P-1}\left(x_t^{(n)} - \sum_{m=1}^w \hat{a}_m x_{t-m}^{(n)}\right) \overline{x}^{(n)}_{t-j} = 0
\end{equation}
for all $j=1,\ldots,w$. Rearranging the terms in \eqref{eq:mid_identity_conlikeli} leads to
\begin{equation}
    \label{eq:later_identity_conlikeli}
    \sum_{m=1}^w \hat{a}_m \sum_{t=w}^{P-1} \frac{1}{N} \sum_{n=1}^N x_{t-m}^{(n)} \overline{x}_{t-j}^{(n)} = \sum_{t=w}^{P-1} \frac{1}{N} \sum_{n=1}^N x_t^{(n)} \overline{x}_{t-j}^{(n)}
\end{equation}
for all $j=1,\ldots,w$. By inserting the definition of $\tilde{\bm{S}}$ in \eqref{eq:S_tilde} and rewriting \eqref{eq:later_identity_conlikeli} in matrix-vector notation, we end up with
\begin{equation}
    \tilde{\bm{S}}_{\geq 1, \geq 1} \hat{\bm{a}} = \tilde{\bm{S}}_{\geq 1, 0}.
\end{equation}
For deriving the closed form of $\hat{\sigma}^2$ in \eqref{eq:closed_form_AR}, we first observe that by multiplying \eqref{eq:mid_identity_conlikeli} with $\hat{a}_j$ and summing over all $j = 1,\ldots,w$, we yield
\begin{equation}
    \label{eq:sigma_identity}
    \frac{1}{N}\sum_{n=1}^N \sum_{t=w}^{P-1}\left( \left|\sum_{j=1}^w \hat{a}_j x_{t-j}^{(n)}\right|^2 - x_t^{(n)} \sum_{j=1}^w \overline{a}_j \overline{x}_{t-j}^{(n)} \right) = 0.
\end{equation}
Setting the derivative of $\mathcal{L}_{L|C}(\hat{\bm{a}},\sigma^2)$ with respect to $\sigma^2$ to zero and rearranging the terms leads to
\begin{align}
    \hat{\sigma}^2 = & \frac{1}{N(P-w)} \sum_{n=1}^N \sum_{t=w}^{P-1} \left|x_t^{(n)} - \sum_{m=1}^w \hat{a}_m x_{t-m}^{(n)}\right|^2 \\\nonumber
    = & \frac{1}{N(P-w)} \sum_{n=1}^N \sum_{t=w}^{P-1} |x_t^{(n)}|^2 - \overline{x}_t^{(n)}\sum_{m=1}^w \hat{a}_m x_{t-m}^{(n)} \\ & - x_t^{(n)} \sum_{m=1}^w \overline{\hat{a}}_m\overline{x}_{t-m}^{(n)} + \left|\sum_{m=1}^{w} \hat{a}_m x_{t-m}^{(n)}\right|^2 \\ 
    = & \frac{1}{N(P-w)} \sum_{n=1}^N \sum_{t=w}^{P-1} |x_t^{(n)}|^2 - \overline{x}_t^{(n)}\sum_{m=1}^w \hat{a}_m x_{t-m}^{(n)},
\end{align}
where the third equality follows from \eqref{eq:sigma_identity}. By inserting the definition of $\tilde{\bm{S}}$ from \eqref{eq:S_tilde}, we end up with
\begin{equation}
    \hat{\sigma}^2 = \frac{1}{P-w}\left( \tilde{\bm{S}}_{0,0} - \tilde{\bm{S}}_{\geq 1,0}^{\operatorname{H}}\hat{\bm{a}} \right)
\end{equation}
which concludes the proof. 
\subsection{Proof of Lemma \ref{lem:lemma_tr}}
\label{proof_lemma_tr}
Let $\tilde{\bm{D}} = \bm{D}(\operatorname{\mathbf{E}}^k)^{\operatorname{T}}$.
It can be seen that
\begin{equation}
    \label{eq:D_tilde_element}
    \tilde{D}_{ij} = \begin{cases}0 & \mathrm{if}\ (j<k) \lor (i < j-k)\\ d_{|i-j+k|}\ & \mathrm{else}\end{cases}.
\end{equation}
Therefore, we can write
\begin{equation}
    \tr(\bm{C}_{\mathrm{T}}\bm{D}(\operatorname{\mathbf{E}}^k)^{\operatorname{T}}) = \sum_{i,j = 0}^{P-1}c_{|i-j|}\tilde{D}_{ji} = \sum_{j=k}^{P-1}\sum_{i = j-k}^{P-1}c_{|i-j|}d_{|i-j+k|}.
\end{equation}
By re-indexing $m = i-j+k$, we get
\begin{equation}
    \label{eq:lemma_tr_zwi}
    \tr(\bm{C}_{\mathrm{T}}\bm{D}(\operatorname{\mathbf{E}}^k)^{\operatorname{T}}) = \sum_{j=k}^{P-1}\sum_{m=0}^{P-1+k-j} c_{|m-k|}d_m.
\end{equation}
The double-sum's summation region in the space spanned by the indices $j$ and $m$ corresponds geometrically to a rectangular and a triangular. From this consideration, it can be seen that $\tr(\bm{C}_{\mathrm{T}}\bm{D}(\operatorname{\mathbf{E}}^k)^{\operatorname{T}})$ equals
\begin{equation}
\begin{aligned}
  \sum_{m=0}^{k-1}\sum_{j=k}^{P-1}c_{|m-k|}d_m + \sum_{m=k}^{P-1}\sum_{j=k}^{P-1+k-m}c_{|m-k|}d_m \\=(P-k) \sum_{m=0}^{k-1}c_{|m-k|}d_m + \sum_{m=k}^{P-1}(P-m)c_{|m-k|}d_m.
\end{aligned}
\end{equation}
\subsection{Proof of Lemma \ref{lem:lemma_tr2}}
\label{proof_lemma_tr2}
Similarly to the proof in Appendix \ref{proof_lemma_tr}, we start defining $\tilde{\bm{D}} = \bm{D}(\operatorname{\mathbf{E}}^k)^{\operatorname{T}}$, which is given in \eqref{eq:D_tilde_element}. By following Appendix \ref{proof_lemma_tr} and re-indexing $m=k+j$ and $b=j-k$, we get
\begin{equation}
    \tr(\bm{Q}\bm{D}(\operatorname{\mathbf{E}}^k)^{\operatorname{T}}) = \sum_{b=0}^{P-1-k}\sum_{m=0}^{P-1-b} \bm{Q}_{b+k,b+m}d_m.
\end{equation}
Similarly to the proof in Appendix \ref{proof_lemma_tr}, the double-sum's summation region corresponds to a triangular and a rectangular in the space spanned by he indices $m$ amd $b$. From this consideration, it can be seen that $\tr(\bm{Q}\bm{D}(\operatorname{\mathbf{E}}^k)^{\operatorname{T}})$ equals
\begin{equation}
    \sum_{m=0}^{k-1} d_m \sum_{b=0}^{P-1-k} \bm{Q}_{b+k,b+m} + \sum_{m=k}^{P-1} d_m \sum_{b=0}^{P-1-m} \bm{Q}_{b+k,b+m},
\end{equation}
which concludes the proof.
\bibliographystyle{IEEEtran.bst}
\bibliography{references.bib}

\begin{thebibliography}{10}
\providecommand{\url}[1]{#1}
\csname url@samestyle\endcsname
\providecommand{\newblock}{\relax}
\providecommand{\bibinfo}[2]{#2}
\providecommand{\BIBentrySTDinterwordspacing}{\spaceskip=0pt\relax}
\providecommand{\BIBentryALTinterwordstretchfactor}{4}
\providecommand{\BIBentryALTinterwordspacing}{\spaceskip=\fontdimen2\font plus
\BIBentryALTinterwordstretchfactor\fontdimen3\font minus
  \fontdimen4\font\relax}
\providecommand{\BIBforeignlanguage}[2]{{%
\expandafter\ifx\csname l@#1\endcsname\relax
\typeout{** WARNING: IEEEtran.bst: No hyphenation pattern has been}%
\typeout{** loaded for the language `#1'. Using the pattern for}%
\typeout{** the default language instead.}%
\else
\language=\csname l@#1\endcsname
\fi
#2}}
\providecommand{\BIBdecl}{\relax}
\BIBdecl

\bibitem{Jolliffe1986}
I.~Jolliffe, \emph{Principal Component Analysis}.\hskip 1em plus 0.5em minus
  0.4em\relax Springer Verlag, 1986.

\bibitem{Kailath2000}
T.~Kailath, A.~Sayed, and B.~Hassibi., \emph{\BIBforeignlanguage{eng}{Linear
  estimation}}.\hskip 1em plus 0.5em minus 0.4em\relax Upper Saddle River, NJ:
  Prentice Hall, 2000.

\bibitem{Bester2011}
C.~A. Bester, T.~G. Conley, and C.~B. Hansen, ``Inference with dependent data
  using cluster covariance estimators,'' \emph{J. Econom.}, vol. 165, no.~2,
  pp. 137--151, 2011.

\bibitem{Pourahmadi2013}
M.~Pourahmadi, \emph{High-Dimensional Covariance Estimation}, ser. Wiley Series
  in Probability and Statistics.\hskip 1em plus 0.5em minus 0.4em\relax John
  Wiley \& Sons, 2013.

\bibitem{Markowitz1952}
H.~Markowitz, ``Portfolio selection,'' \emph{J. Finance}, vol.~7, no.~1, pp.
  77--91, 1952.

\bibitem{Guerci1999}
J.~Guerci, ``Theory and application of covariance matrix tapers for robust
  adaptive beamforming,'' \emph{{IEEE} Trans. Signal Process.}, vol.~47, no.~4,
  pp. 977--985, 1999.

\bibitem{Fan2013}
J.~Fan and H.~Liu, ``Statistical analysis of big data on pharmacogenomics,''
  \emph{Advanced Drug Delivery Reviews}, vol.~65, no.~7, pp. 987--1000, 2013.

\bibitem{Jianqing2014}
J.~Fan, F.~Han, and H.~Liu, ``{Challenges of Big Data analysis},''
  \emph{National Science Review}, vol.~1, no.~2, pp. 293--314, 02 2014.

\bibitem{Snyder1989}
D.~L. {Snyder}, J.~A. {O'Sullivan}, and M.~I. {Miller}, ``{The use of maximum
  likelihood estimation for forming images of diffuse radar targets from
  delay-Doppler data},'' \emph{{IEEE} Trans. Inf. Theory}, vol.~35, pp.
  536--548, May 1989.

\bibitem{Fuhrmann1991}
D.~Fuhrmann, ``Application of {Toeplitz} covariance estimation to adaptive
  beamforming and detection,'' \emph{{IEEE} Trans. Signal Process.}, vol.~39,
  no.~10, pp. 2194--2198, 1991.

\bibitem{Ephraim1989}
Y.~Ephraim, D.~Malah, and B.-H. Juang, ``On the application of hidden {Markov}
  models for enhancing noisy speech,'' \emph{{IEEE} Trans. Acoust., Speech,
  Signal Process}, vol.~37, no.~12, pp. 1846--1856, 1989.

\bibitem{Furrer2007}
R.~Furrer and T.~Bengtsson, ``Estimation of high-dimensional prior and
  posterior covariance matrices in {Kalman} filter variants,'' \emph{Journal of
  Multivariate Analysis}, vol.~98, no.~2, pp. 227--255, 2007.

\bibitem{Derado2010}
G.~Derado, F.~Bowman, and D.~K. Clintion, ``Modeling the spatial and temporal
  dependence in {FMRI} data,'' \emph{Biometrics}, vol.~66, pp. 949--957, 2010.

\bibitem{Stüber2017}
G.~L. St\"uber, \emph{Principles of Mobile Communication}.\hskip 1em plus 0.5em
  minus 0.4em\relax Springer International Publishing, 2017.

\bibitem{Paliwal2010}
K.~K. Paliwal, J.~G. Lyons, and K.~K. Wójcicki, ``Preference for 20-40 ms
  window duration in speech analysis,'' in \emph{2010 4th Int. Conf. on Signal
  Process. and Commun. Systems}, 2010, pp. 1--4.

\bibitem{Heckens2020}
A.~J. Heckens, S.~M. Krause, and T.~Guhr, ``Uncovering the dynamics of
  correlation structures relative to the collective market motion,''
  \emph{Journal of Statistical Mechanics: Theory and Experiment}, vol. 2020,
  no.~10, p. 103402, oct 2020.

\bibitem{Anderson1973}
T.~W. Anderson, ``Asymptotically efficient estimation of covariance matrices
  with linear structure,'' \emph{The Annals of Statistics}, vol.~1, no.~1, pp.
  135 -- 141, 1973.

\bibitem{Burg1982}
J.~Burg, D.~Luenberger, and D.~Wenger, ``Estimation of structured covariance
  matrices,'' \emph{Proceedings of the IEEE}, vol.~70, no.~9, pp. 963--974,
  1982.

\bibitem{Miller1987}
M.~Miller and D.~Snyder, ``The role of likelihood and entropy in
  incomplete-data problems: Applications to estimating point-process
  intensities and {Toeplitz} constrained covariances,'' \emph{Proceedings of
  the IEEE}, vol.~75, no.~7, pp. 892--907, 1987.

\bibitem{Dempster1977}
A.~P. Dempster, N.~M. Laird, and D.~B. Rubin, ``Maximum likelihood from
  incomplete data via the {EM} algorithm,'' \emph{Journal of the Royal
  Statistical Society: Series B (Methodological)}, vol.~39, no.~1, pp. 1--22,
  1977.

\bibitem{Cai2013}
T.~Cai, Z.~Ren, and H.~Zhou, ``Optimal rates of convergence for estimating
  {Toeplitz} covariance matrices,'' \emph{Probability Theory and Related
  Fields}, vol. 156, 06 2013.

\bibitem{Bickel2008B}
P.~J. Bickel and E.~Levina, ``{Regularized estimation of large covariance
  matrices},'' \emph{The Annals of Statistics}, vol.~36, no.~1, pp. 199 -- 227,
  2008.

\bibitem{Cai2010}
T.~Cai, C.~Zhang, and H.~Zhou, ``Optimal rates of convergence for covariance
  matrix estimation,'' \emph{The Annals of Statistics}, vol.~38, no.~4, pp.
  2118--2144, 2010.

\bibitem{Wu2009}
W.~B. Wu and M.~Pourahmadi, ``Banding sample autocovariance matrices of
  stationary processes,'' \emph{Statistica Sinica}, vol.~19, no.~4, pp.
  1755--1768, 2009.

\bibitem{Murry2010}
T.~L. McMurry and D.~N. Politis, ``Banded and tapered estimates for
  autocovariance matrices and the linear process bootstrap,'' \emph{Journal of
  Time Series Analysis}, vol.~31, no.~6, pp. 471--482, 2010.

\bibitem{Brockwell2016}
P.~J. Brockwell and R.~A. Davis, \emph{Introduction to Time Series and
  Forecasting}, ser. Springer Texts in Statistics.\hskip 1em plus 0.5em minus
  0.4em\relax Springer International Publishing, 2016.

\bibitem{Lancewicki2014}
T.~Lancewicki and M.~Aladjem, ``Multi-target shrinkage estimation for
  covariance matrices,'' \emph{{IEEE} Trans. Signal Process.}, vol.~62, no.~24,
  pp. 6380--6390, 2014.

\bibitem{Ledoit2004}
O.~Ledoit and M.~Wolf, ``A well-conditioned estimator for large-dimensional
  covariance matrices,'' \emph{Journal of Multivariate Analysis}, vol.~88,
  no.~2, pp. 365--411, 2004.

\bibitem{ChenWiesel2010}
Y.~Chen, A.~Wiesel, Y.~C. Eldar, and A.~O.~H. III, ``Shrinkage algorithms for
  {MMSE} covariance estimation,'' \emph{{IEEE} Trans. Signal Process.},
  vol.~58, no.~10, pp. 5016--5029, 2010.

\bibitem{Kay1988}
S.~Kay, \emph{Modern Spectral Estimation: Theory and Application}, ser.
  Prentice-Hall signal processing series.\hskip 1em plus 0.5em minus
  0.4em\relax Prentice Hall, 1988.

\bibitem{Hamilton1994}
J.~D. Hamilton, \emph{Time Series Analysis}.\hskip 1em plus 0.5em minus
  0.4em\relax Princeton University Press, 1994.

\bibitem{Mukherjee1988}
B.~N. Mukherjee and S.~S. Maiti, ``On some properties of positive definite
  {Toeplitz} matrices and their possible applications,'' \emph{Linear Algebra
  and its Applications}, vol. 102, pp. 211--240, 1988.

\bibitem{McWhorter1995}
L.~McWhorter and L.~Scharf, ``Nonlinear maximum likelihood estimation of
  autoregressive time series,'' \emph{{IEEE} Trans. Signal Process.}, vol.~43,
  no.~12, pp. 2909--2919, 1995.

\bibitem{Sahin2018}
A.~Sahin, ``Inverse and factorization of triangular {Toeplitz} matrices,''
  \emph{Miskolc Mathematical Notes}, vol.~19, p. 527, 01 2018.

\bibitem{Zhang2019}
B.~Zhang, J.~Zhou, and J.~Li, ``Improved shrinkage estimators of covariance
  matrices with {Toeplitz}-structured targets in small sample scenarios,''
  \emph{IEEE Access}, vol.~7, pp. 116\,785--116\,798, 2019.

\bibitem{Boyd2004}
S.~Boyd and L.~Vandenberghe, \emph{Convex optimization}.\hskip 1em plus 0.5em
  minus 0.4em\relax Cambridge university press, 2004.

\bibitem{Benson2003}
H.~Benson and R.~Vanderbei, ``Solving problems with semidefinite and related
  constraints using interior-point methods for nonlinear programming,''
  \emph{Math. Program.}, vol.~95, pp. 279--302, 02 2003.

\bibitem{Kato1966}
T.~Kato, \emph{Perturbation Theory for Linear Operators}, 1966.

\bibitem{Lewis1994}
A.~Lewis, ``Derivatives of spectral functions,'' \emph{Mathematics of
  Operations Research}, vol.~21, 04 1994.

\bibitem{Lin2004}
F.-R. Lin, W.-K. Ching, and M.~K. Ng, ``Fast inversion of triangular {Toeplitz}
  matrices,'' \emph{Theoretical Computer Science}, vol. 315, no.~2, pp.
  511--523, 2004, algebraic and Numerical Algorithms.

\bibitem{Kucerovsky2016}
D.~Kucerovsky, K.~Mousavand, and A.~Sarraf, ``On some properties of {Toeplitz}
  matrices,'' \emph{Cogent Mathematics}, vol.~3, no.~1, p. 1154705, 2016.

\bibitem{Kirchgaessner2007}
G.~Kirchg\"assner and J.~Wolters, \emph{Introduction to Modern Time Series
  Analysis}.\hskip 1em plus 0.5em minus 0.4em\relax Springer Verlag, 01 2007.

\bibitem{Kailath1975}
T.~Kailath, A.~Vieira, and M.~Morf, ``Inverses of {Toeplitz} operators,
  innovations, and orthogonal polynomials,'' in \emph{1975 IEEE Conf. Decision
  Control including the 14th Symposium on Adaptive Processes}, 1975, pp.
  749--754.

\bibitem{Bishop2007}
C.~M. Bishop, \emph{Pattern Recognition and Machine Learning (Information
  Science and Statistics)}, 1st~ed.\hskip 1em plus 0.5em minus 0.4em\relax
  Springer, 2007.

\bibitem{David}
B.~David and G.~Bastin, ``An estimator of the inverse covariance matrix and its
  application to {ML} parameter estimation in dynamical systems,''
  \emph{Automatica}, vol.~37, no.~1, pp. 99--106, 2001.

\bibitem{FriedmanHastie2018}
J.~Friedman, T.~Hastie, and R.~Tibshirani, ``{Sparse inverse covariance
  estimation with the Lasso},'' \emph{Biostatistics}, vol.~9, 2018.

\bibitem{Neumann2018}
D.~Neumann, M.~Joham, and W.~Utschick, ``Covariance matrix estimation in
  massive {MIMO},'' \emph{{IEEE} Signal Process. Lett.}, vol.~25, no.~6, pp.
  863--867, 2018.

\bibitem{Bojanczyk1995}
A.~W. Bojanczyk, R.~P. Brent, F.~R. De~Hoog, and D.~R. Sweet, ``On the
  stability of the {Bareiss} and related {Toeplitz} factorization algorithms,''
  \emph{SIAM Journal on Matrix Analysis and Applications}, vol.~16, no.~1, pp.
  40--57, 1995.

\bibitem{Leland1994}
W.~Leland, M.~Taqqu, W.~Willinger, and D.~Wilson, ``On the self-similar nature
  of ethernet traffic (extended version),'' \emph{{IEEE/ACM} Trans. Netw.},
  vol.~2, no.~1, pp. 1--15, 1994.

\end{thebibliography}
\newpage
\end{document}